\documentclass{article}
\usepackage{float}
\usepackage{amsmath}
\usepackage{amsthm}
\usepackage{amssymb}
\usepackage{graphicx,cite}


\floatstyle{ruled}
\newfloat{algorithm}{tbp}{loa}
\providecommand{\algorithmname}{Algorithm}
\floatname{algorithm}{\protect\algorithmname}

\theoremstyle{definition}
\newtheorem{defn}{\protect\definitionname}
\theoremstyle{plain}
\newtheorem{prop}{\protect\propositionname}
\theoremstyle{plain}
\newtheorem{thm}{\protect\theoremname}
\theoremstyle{plain}
\newtheorem{lem}{\protect\lemmaname}

\usepackage[accepted]{icml2020}

\usepackage{dsfont}

\renewcommand\P{{\mathbb{P}}}

\usepackage{amsmath}
\newcommand\numberthis{\addtocounter{equation}{1}\tag{\theequation}}


\providecommand{\definitionname}{Definition}
\providecommand{\lemmaname}{Lemma}
\providecommand{\propositionname}{Proposition}
\providecommand{\theoremname}{Theorem}

\begin{document}


\twocolumn[
\icmltitle{My Fair Bandit: Distributed Learning of Max-Min Fairness with Multi-player Bandits}

\begin{icmlauthorlist}
\icmlauthor{Ilai Bistritz}{stan}
\icmlauthor{Tavor Z. Baharav}{stan}
\icmlauthor{Amir Leshem}{barIlan}
\icmlauthor{Nicholas Bambos}{stan}
\end{icmlauthorlist}

\icmlaffiliation{stan}{Department of Electrical Engineering, Stanford University}
\icmlaffiliation{barIlan}{Faculty of Engineering, Bar-Ilan University}

\icmlcorrespondingauthor{Ilai Bistritz}{bistritz@stanford.edu}
\icmlcorrespondingauthor{Tavor Z. Baharav}{tavorb@stanford.edu}

\icmlkeywords{Multi-player bandits, fairness, resource allocation}

\vskip 0.3in
]

\printAffiliationsAndNotice{}  

\begin{abstract}
Consider $N$ cooperative but non-communicating players where each plays one out of $M$ arms for $T$
turns. Players have different utilities for each arm, representable
as an $N\times M$ matrix. These utilities are unknown to the
players.
In each turn players select an arm and receive a noisy observation of their utility for it. However, if any other players selected the same arm that turn, all colliding players will receive zero utility due to the conflict.
No other communication or coordination between the players
is possible. Our goal is to design a distributed algorithm that learns
the matching between players and arms that achieves max-min fairness
while minimizing the regret. We present an algorithm and prove that
it is regret optimal up to a $\log\log T$ factor. This is the first max-min fairness multi-player bandit algorithm with (near) order optimal regret. 
\end{abstract}

\section{Introduction}

In online learning problems, an agent sequentially makes
decisions and receives an associated reward. When this reward is
stochastic, the problem takes the form of a stochastic multi-armed
bandit problem \cite{bubeck2012regret}. However, stochastic bandits assume
stationary reward processes that are rarely the case in practice; they are too optimistic about the environment.
To deal with this shortcoming, one can model the rewards as being determined by
an adversary, which leads to a formulation known as non-stochastic or adversarial bandits
\cite{bubeck2012regret}. Naturally, the performance guarantees against
such a powerful adversary are much weaker than in the stochastic case, as
adversarial bandits are usually overly pessimistic about the environment. 
Is there an alternative that lies in the gap between the two? 

Multi-player bandits is a promising answer, which has seen a surge of interest in recent years
\cite{Liu2010,Vakili2013,Lai2008,Anandkumar2011,liu2019competing,magesh2019multi,Liu2013,Avner2014,sankararaman2019social,Evirgen2017,Cohen2017,Avner2016}.
One primary motivation for studying multi-player bandits is distributed resource allocation.
Examples include channels in communication networks, computation resources on servers, consumers and items, etc. 
In most applications of interest, the reward of a player is a stochastic function of the decisions of other players that operate in the same environment.
Thinking of arms as resources, we see that while these players are not adversaries, conflicts still arise due to the players' preferences among the limited resources. To model that, we assign zero reward to players that choose the same arm. 
The goal of multi-player bandit algorithms is to provide a distributed
way to learn how to share these resources optimally in an online manner. This is
useful in applications where agents (players) follow a standard or protocol,
like in wireless networks, autonomous vehicles, or with a team of robots. 

A common network performance objective is the sum of rewards of the
players over time. As such, maximizing the sum of rewards has received the vast majority of the attention in the multi-player bandit literature
\cite{hanawal2018multi,Besson2018,tibrewal2019distributed,bistritz2018distributed,Kalathil2014,Nayyar2016,boursier2019sic,boursier2019practical}.
However, in the broader literature of network optimization, the sum
of rewards is only one possible objective. 
One severe drawback of this objective is that it has no fairness guarantees.
As such, the maximal sum of rewards
assignment might starve some users. In many applications,
the designer wants to make sure that all users will enjoy at least
minimal target Quality of Service (QoS).

Ensuring fairness has recently been recognized by the machine learning community as a problem of
key importance. In addition to interest in fair classifiers
\cite{zemel2013learning}, fairness has been recognized as a major design parameter
in reinforcement learning and single-player bandits as well \cite{jabbari2017fairness,wei2015mixed,joseph2016fairness}. Our work addresses this major concern for the emerging field of multi-player bandits.

In the context of maximizing the sum of rewards, many works on multi-player
bandits have considered a model where all players have the same vector
of expected rewards \cite{Rosenski2016,boursier2019sic,alatur2019multi,bubeck2019non}. While being relevant in
some applications this model is not rich enough to study fairness, as then the worst off player is simply the one that was allocated the worst resource.
To study fairness a heterogeneous model is necessary, where players have different expected rewards
for the arms (a matrix of expected rewards).
In this case, a fair allocation may prevent
some players from getting their best arm in order to significantly
improve the allocation for less fortunate players. 

Despite being a widely-applied objective in the broader resource allocation literature  \cite{radunovic2007unified,zehavi2013weighted,asadpour2010approximation}, max-min fairness in multi-player bandits has yet to be studied. Some bandit works have studied alternative objectives
that can potentially exhibit some level of fairness  \cite{darak2019multi,bar2019individual}.
In the networking literature, a celebrated notion of fairness is \textit{$\alpha$-fairness} \cite{mo2000fair_alpha} where $\alpha=1$ yields proportional fairness and $\alpha=2$ yields sum of utilities. While for constant $\alpha$, $\alpha$-fairness can be maximized in a similar manner to \cite{bistritz2018distributed}, the case of max-min fairness corresponds to $\alpha\rightarrow \infty$ and is fundamentally different.

Learning to play the max-min fairness allocation involves major technical
challenges that do not arise in the case of maximizing the sum of
rewards (or in the case of $\alpha$- fairness). The sum-rewards
optimal allocation is unique for ``almost all'' scenarios (randomizing the expected rewards).
This is not the case with max-min fairness, as there will typically be multiple optimal allocations.
This complicates
the distributed learning process, since players will have to agree on a specific
optimal allocation to play, which is difficult to do without communication.
Specifically, this rules out using similar techniques to those used
in \cite{bistritz2018distributed} to solve the sum of rewards case
under a similar multi-player bandit setting (i.e., matrix of expected
rewards, no communication, and a collision model). 

Trivially, any matching where all players achieve reward greater than $\gamma$
has max-min value of at least $\gamma$. We observe that finding these matchings, called $\gamma$-matchings in this paper,
can be done via simple dynamics that introduce an absorbing
Markov chain with the $\gamma$-matchings as the absorbing states.
This insight allows for a more robust algorithm than that of \cite{bistritz2018distributed}
which relies on ergodic Markov chains that have an exploration
parameter $\varepsilon$ that has to be tuned. 

In this work we provide an algorithm
that has provably order optimal regret up to a $\log\log T$ factor
(that can be arbitrarily improved to any factor that increases with the horizon
$T$). We adopt the challenging model with heterogeneous
arms (a matrix of expected rewards) and no communication between the
players. The only information players receive regarding their peers is
through the collisions that occur when two or more players pick the
same arm. 

\subsection{Outline}

In Section 2 we formulate
our multi-player bandit problem of learning the max-min matching under the collision model with no communication
between players. In Section 3 we present our distributed max-min fairness algorithm and state our regret bound (proved in Section 8).
Section 4 analyzes the exploration phase of our algorithm. Section
5 analyzes the matching phase of our algorithm and bounds the probability of the exploitation error event. Section 6 presents simulation
results that corroborate our theoretical findings and demonstrate that
our algorithm learns the max-min optimal matching faster than our analytical
bounds suggest. Finally, Section 7 concludes the paper. 

\section{Problem Formulation}
We consider a stochastic game played by a set of $N$ players $\mathcal{N}=\left\{ 1,...,N\right\} $
over a finite time horizon $T$. The strategy
space of each player is the set of $M$ arms with indices denoted
by $i,j\in\{1,...,M\}$. We assume that $M\geq N$, since otherwise the
max-min utility is trivially zero. The horizon $T$ is not known by any of the players, and is considered to be much larger than $M$ and $N$ since we assume that the game is played for a long time. 
Let $t$ be the discrete turn index.  At each turn $t$, all players
simultaneously pick one arm each. The arm that player $n$ chooses
at turn $t$ is $a_{n}\left(t\right)$ and the strategy profile (vector of arms selected) at
turn $t$ is $\boldsymbol{a}\left(t\right)$. Players do not know
which arms the other players chose, and need not even know the number of players $N$.

Define the no-collision indicator of arm $i$ in strategy profile
$\boldsymbol{a}$ to be
\begin{equation}
\eta_{i}\left(\boldsymbol{a}\right)=
\begin{cases}
0 & \Bigl|\mathcal{N}_{i}\left(\boldsymbol{a}\right)\Bigr|>1\\
1 & otherwise.
\end{cases}
\label{eq:1}
\end{equation}
where $\mathcal{N}_{i}\left(\boldsymbol{a}\right)=\left\{ n\,|\,a_{n}=i\right\} $
is the set of players that chose arm $i$ in strategy profile $\boldsymbol{a}$. 
The instantaneous utility of player $n$ at time $t$ with strategy profile $\boldsymbol{a}\left(t\right)$ is
\begin{equation}
\upsilon_{n}\left(\boldsymbol{a}\left(t\right)\right)=r_{n,a_{n}\left(t\right)}\left(t\right)\eta_{a_{n}\left(t\right)}\left(\boldsymbol{a}\left(t\right)\right)\label{eq:2}
\end{equation}
where $r_{n,a_{n}\left(t\right)}\left(t\right)$ is a random reward
which is assumed to have a continuous distribution on $\left[0,1\right]$.
The sequence of rewards of arm $i$ for player $n$, $\left\{ r_{n,i}\left(t\right)\right\} _{t=1}^T$,
 is i.i.d. with expectation
$\mu_{n,i}$. 

An immediate motivation for the collision model above is channel allocation in wireless networks, where the transmission of one user creates interference for other users on the same channel and causes their transmission to fail. Since coordinating a large number of devices in a centralized manner is infeasible, distributed channel allocation algorithms are desirable in practice. In this context, our distributed algorithm learns over time how to assign the channels (arms) such that the maximal QoS guarantee is maintained for all users. However, the collision model is relevant to many other resource allocation scenarios where the resources are discrete items that cannot be shared. 

Next we define the total expected regret for this problem. This is the expected regret that the cooperating but non-communicating players accumulate over time from not playing the optimal max-min allocation. 
\begin{defn}
The total expected regret is defined as
\begin{equation}
R\left(T\right)=\sum_{t=1}^{T} \left(\gamma^{*} - \underset{n}{\min}\ \mathbb{E}\left\{ \upsilon_{n}\left(\boldsymbol{a}\left(t\right)\right)\right\}\right) 
\label{eq:3}
\end{equation}
where $\gamma^{*}=\underset{\boldsymbol{a}}{\max}\ \underset{n}{\min }\ \mathbb{E}\left\{ \upsilon_{n}\left(\boldsymbol{a}\right)\right\}$.
The expectation is over the randomness of the rewards $\left\{ r_{n,i}\left(t\right)\right\} _{t}$, that dictate the random actions $\left\{ a_{n}\left(t\right)\right\}_{t}$.

\end{defn}

Note that replacing the minimum in \eqref{eq:3} with a sum over the players yields the regret for the sum-reward objective case, after redefining $\gamma^{*}$ to be the optimal sum-reward \cite{bistritz2018distributed,Kalathil2014,Nayyar2016,tibrewal2019distributed,boursier2019sic}.

Rewards with a continuous distribution are 
natural in many applications (e.g., SNR in wireless networks). However,
this assumption is only used to argue that since the probability for
zero reward in a non-collision is zero, players can properly estimate their expected rewards. In the case
where the probability of receiving zero reward is not zero, we can assume
instead that each player can observe their no-collision indicator in addition
to their reward. This alternative assumption requires no modifications to our algorithm or analysis. 
Observing one bit of feedback signifying whether other players chose
the same arm is significantly less demanding than observing the actions of other players. 

According to the seminal work in \cite{Lai1985}, the optimal regret
of the single-player case is $O\left(\log T\right)$.
The next proposition shows that $\Omega\left(\log T\right)$ is a lower bound for our multi-player bandit case, since any multi-player bandit algorithm can be used as a single-player algorithm by simulating other players. 
\begin{prop}
\label{Prop Lower Bound}The total expected regret as defined in \eqref{eq:3} of any algorithm is at least $\Omega\left(\log T\right)$.
\end{prop}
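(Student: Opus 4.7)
The plan is to reduce to the classical single-player stochastic bandit lower bound of Lai and Robbins. The key observation is that the multi-player formulation with $N=1$ \emph{is} a standard single-player stochastic bandit, so any universal lower bound for the latter automatically transfers. The text itself suggests a slightly more general reduction (simulating phantom peers), but the $N=1$ specialization already suffices and is cleaner.

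Concretely, I would first specialize the setup to $N=1$. When there is only one player, $\bigl|\mathcal{N}_{i}(\boldsymbol{a})\bigr|\leq 1$ for every arm $i$, so the no-collision indicator $\eta_i(\boldsymbol{a})$ in \eqref{eq:1} is identically $1$, and the instantaneous utility \eqref{eq:2} becomes $\upsilon_{1}(\boldsymbol{a}(t)) = r_{1, a_1(t)}(t)$. Consequently $\gamma^{*} = \max_i \mu_{1,i}$, and the regret \eqref{eq:3} reduces to
\[
R(T) \;=\; \sum_{t=1}^{T}\Bigl(\max_i \mu_{1,i} \;-\; \mathbb{E}\bigl[r_{1, a_1(t)}(t)\bigr]\Bigr),
\]
which is precisely the expected regret of a single agent facing $M$ i.i.d.\ arms with continuous rewards in $[0,1]$.

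Next I would invoke the Lai--Robbins lower bound \cite{Lai1985}: for any $M\geq 2$ and any policy, there exists an instance on which the expected regret is $\Omega(\log T)$. Since any multi-player bandit algorithm, being distributed, prescribes for each individual player a policy that depends only on that player's own history of arm choices and observed rewards, running the player-$1$ policy in isolation yields a valid single-player algorithm. The $N=1$ specialization above shows that its multi-player regret on the $N=1$ instance equals its single-player regret, which Lai--Robbins bounds from below by $\Omega(\log T)$. Thus every multi-player algorithm incurs regret $\Omega(\log T)$ on some instance, as claimed.

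I do not expect a genuine obstacle here; the result is essentially a one-line reduction. The only minor care needed is to note that the problem class considered in Lai--Robbins (i.i.d.\ bounded rewards with continuous distributions on $[0,1]$) is contained in the class allowed here, so the lower bound applies verbatim without additional regularity assumptions. If one instead wanted to carry out the simulation argument hinted at in the text for $N \geq 2$, the slight subtlety would be to argue that a single agent can generate the phantom peers' actions using auxiliary private randomness, so that the induced single-player regret is at most the multi-player regret on the constructed instance; both routes yield the same $\Omega(\log T)$ conclusion.
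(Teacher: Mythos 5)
Your reduction for the case $N=1$ is correct and is exactly the first line of the paper's proof: with one player the collision indicator is identically $1$, the problem is a standard stochastic bandit, and the Lai--Robbins bound applies. The gap is in your claim that this specialization ``already suffices.'' The proposition is a statement about the multi-player problem: the problem formulation fixes $N$ players at the outset, and the point of the lower bound is to certify that the paper's algorithm is near order-optimal for the actual multi-player setting with $N\geq 2$. A hardness result for the degenerate $N=1$ instance says nothing about whether, for a given $N\geq 2$, some distributed algorithm could achieve $o(\log T)$ regret; formally, your argument proves ``there exists an instance of the family of problems on which regret is $\Omega(\log T)$,'' whereas the paper proves ``for every $N$ (and $M\geq N$), every algorithm suffers $\Omega(\log T)$ on some instance.'' The latter is what the paper needs and what its proof actually delivers, via the simulation argument you mention only in passing.

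That simulation argument is the real content of the proof and is not entirely free: one must construct the $N$-player instance so that the minimum in \eqref{eq:3} is controlled by the real player, i.e.\ give the $N-1$ phantom players expected rewards exceeding the real player's best arm so that $\gamma^{*}$ equals her best mean, let her generate the phantoms' rewards and run their policies with private randomness, and then check turn by turn that the multi-player regret term dominates the single-player one --- in particular that a collision at turn $t$ makes the multi-player term equal to $\gamma^{*}$, which still upper-bounds the single-player term $\gamma^{*}-\mu_{n,a_{n}(t)}$. Your sketch of this route is essentially right, so the fix is simply to promote it from an aside to the main argument for $N>1$ rather than resting the proposition on the $N=1$ case alone.
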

\begin{proof}
For $N=1$, the result directly follows from \cite{Lai1985}. Assume that for $N>1$ there is a policy that results in total expected regret better than
$\varOmega\left(\log T\right)$. Then any single player, denoted player $n$,
can simulate $N-1$ other players such that all their expected rewards are larger than her maximal expected reward.
Player $n$ can also generate
the other players' random rewards, that are independent of the actual rewards she
receives. Player $n$ also simulates the policies for other players,
and even knows when a collision occurred for herself and can assign
zero reward in that case.
In this scenario, the expected reward of player $n$ is the minimal expected reward among the non-colliding players.
This implies that $\gamma^{*}$ is the largest expected reward of player $n$. 
Hence, in every turn $t$ without a collision, the $t$-th term in \eqref{eq:3} is equal to the $t$-th term of the single-player regret of player $n$.
If there is a collision in turn $t$, then the $t$-th term in \eqref{eq:3} is $\gamma^{*}$, which bounds from above the $t$-th term of the single-player regret of player $n$.
Thus, the total expected regret upper bounds the single-player regret of player $n$.
Hence, simulating $N-1$ fictitious players
is a valid single player algorithm that violates the
$\varOmega\left(\log T\right)$ bound, which is a contradiction. We
conclude that the $\varOmega\left(\log T\right)$ bound is also valid for $N>1$.
\end{proof}

\section{My Fair Bandit Algorithm}

In this section we describe our distributed multi-player bandit algorithm
that achieves near order optimal regret for the max-min fairness problem.
The key idea behind our algorithm is a global search parameter $\gamma$
that all players track together (with no communication required).
We define a $\gamma$-matching, which is a matching of players to
arms such that the expected reward of each player is at least $\gamma$.
\begin{defn}
An allocation of arms $\boldsymbol{a}$ is a $\gamma$-matching if and
only if $\underset{n}{\min}\ \mathbb{E}\left\{ \upsilon_{n}\left(\boldsymbol{a}\right)\right\}\geq\gamma$. 
\end{defn}

Essentially, the players want to find the maximal $\gamma$ for which
there still exists a $\gamma$-matching. However, even for a given achievable
$\gamma$, distributedly converging to a $\gamma$-matching is a
challenge. Players do not know their expected rewards, and their coordination is extremely limited. To address
these issues we divide the unknown horizon of $T$ turns into epochs, one starting
immediately after the other. Each epoch is further divided into four
phases. In the $k$-th epoch we have:
\begin{enumerate}
\item \textbf{Exploration Phase} - this phase has a length of $\lceil c_{1}\log (k+1)\rceil$
turns for some $c_{1}\ge4$. It is used for estimating the expectation of the arms. As shown in Section \ref{sec:Exploration}, the exploration phase contributes $O\left(\log\log T\log T\right)$
to the total expected regret. 
\item \textbf{Matching Phase} - this phase has a length of $\lceil c_{2}\log (k+1) \rceil$
turns for some $c_{2}\ge1$. In this phase, players attempt to converge
to a $\gamma_{k}$-matching, where each player plays an arm that
is at least as good as $\gamma_{k}$, up to the confidence intervals
of the exploration phase.
To find the matching,
the players follow distributed dynamics that induce an absorbing Markov
chain with the strategy profiles as states. The absorbing states of this
chain are the desired matchings. When the matching phase is long enough,
the probability that a matching exists but is not found is small.
If a matching does not exist, the matching phase naturally does not converge. As shown in Section \ref{sec:Matching}, the matching phase adds $O\left(\log\log T \log T\right)$
to the total expected regret.  
\item \textbf{Consensus Phase} - this phase has a length of $M$ turns. The
goal of this phase is to let all players know whether the matching
phase ended with a matching, using the collisions for signaling. During this phase, every player who did not end the matching phase with a collision plays their matched arm, while the players that ended the matching phase
with a collision sequentially play all the arms for the next $M$ turns. If players deduce that they collectively converged
to a matching, they note this for future reference in the matching
indicator $S_k$. If the matching phase succeeded, the search parameter is updated
as $\gamma_{k+1}=\gamma_{k}+\varepsilon_{k}$. The step size $\varepsilon_{k}$
 is decreasing such that if even a slightly better
matching exists, it will eventually be found. However, it might be
that no $\gamma_{k}$-matching exists. Hence once in a while, with
decreasing frequency, the players reset $\gamma_{k+1}=0$ in order to
allow themselves to keep finding new matchings. This phase adds $O(M\log T)$ to the total expected
regret.
\item \textbf{Exploitation Phase} - this phase has a length of $\left\lceil c_{3}\left(\frac{4}{3}\right)^{k} \right\rceil$
turns for some $c_{3}\geq1$. During this phase, players play the best recently found matching $\boldsymbol{\tilde{\boldsymbol{a}}}_{k^{*}}$, where $k^{*}$ is the epoch within the last $\frac{k}{2}$ epochs with the largest $\gamma$ that resulted in a matching.
This phase adds a vanishing term (with $T$) to the total expected
regret since players eventually play an optimal matching with exponentially small error probability.
\end{enumerate}
\vspace{-.3cm}
The Fair Bandit Algorithm is detailed in Algorithm \ref{alg:FairBandits}.
Our main result is given next, and is proved in Section 8.
\begin{thm}[Main Theorem]
\label{Main Theorem}Assume that the rewards $\left\{ r_{n,i}\left(t\right)\right\} _{t}$
are independent in $n$ and i.i.d. with $t$, with continuous distributions
on $\left[0,1\right]$ with expectations $\left\{ \mu_{n,i}\right\} $.
Let $T$ be the finite deterministic horizon of the game, which is unknown to the players. Let
each player play according to Algorithm \ref{alg:FairBandits} with
any constants $c_{1},c_{2},c_{3}\geq4$. Then the total expected regret satisfies
\begin{align*}
R\left(T\right)
&\leq C_{0} +\left(M+2\left(c_{1}+c_{2}\right)\log\log_{\frac{4}{3}}\frac{T}{c_{3}}\right)\log_{\frac{4}{3}}\frac{T}{c_{3}}\\
&=O\left(\left(M+\log\log T\right)\log T\right) \numberthis \label{eq:4}
\end{align*}
where $C_{0}$ is a constant independent of $T$ and the $\log\log T$ can be improved to any increasing function of $T$ by changing the lengths of the exploration and matching phases. 
\end{thm}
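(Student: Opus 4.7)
The plan is to decompose $R(T)$ across epochs and phases and bound each piece separately. Because the exploitation phase length $\lceil c_3 (4/3)^k\rceil$ dominates the epoch length and grows geometrically, the number of complete epochs $K$ that fit within horizon $T$ satisfies $K \leq \log_{4/3}(T/c_3) + O(1)$. The per-turn regret is bounded above by $\gamma^{*}\leq 1$, so every phase contributes at most its length to the cumulative regret whenever the players are not playing a max-min optimal matching.

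I would first handle the three non-exploitation phases by summing their deterministic lengths. The exploration phase contributes at most $\sum_{k=1}^{K}\lceil c_1\log(k+1)\rceil = O(c_1 K\log K) = O(c_1\log T\log\log T)$, and the matching phase similarly contributes $O(c_2\log T\log\log T)$. The consensus phase contributes $M$ per epoch, giving $MK = O(M\log T)$. Assembling these yields exactly the explicit leading term $(M + 2(c_1+c_2)\log\log_{4/3}(T/c_3))\log_{4/3}(T/c_3)$ in the statement.

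The substantive content is to show that the exploitation phase adds only the constant $C_0$. For each epoch $k$ I would combine three high-probability events: (i) by Section 4 (Hoeffding with $c_1\geq 4$), after epoch $k$'s exploration phase every empirical mean lies within a shrinking tolerance of $\mu_{n,i}$ with failure probability polynomially small in $k$; (ii) by Section 5 (absorbing Markov chain convergence with $c_2\geq 4$), conditional on good estimates the matching phase reaches a $\gamma_k$-matching, when one exists, with failure probability polynomially small in $k$; and (iii) by the reset-and-increment schedule of $\gamma_k$ together with the shrinking step $\varepsilon_k$, after some $T$-independent epoch $K_0$ a constant fraction of epochs set $\gamma_k$ within $\varepsilon_k$ of $\gamma^{*}$. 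Because the exploitation phase of epoch $k$ plays the best matching stored in the last $k/2$ epochs, it fails only if \emph{all} of those $\Theta(k)$ epochs jointly fail to produce a near-optimal matching, so the joint failure probability decays exponentially in $k$ even though each individual epoch's guarantee is only polynomial.

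The main obstacle is establishing this exponential-in-$k$ bound on the exploitation error. It requires the per-epoch ``good'' event to have probability bounded below by a positive constant once $k\geq K_0$, which in turn depends on the reset schedule being dense enough and on the exploration and matching tolerances shrinking at compatible rates, so that a ``good'' epoch actually finds and stores a $\gamma$-matching with $\gamma$ close to $\gamma^{*}$. Once this is done, the exploitation regret in epoch $k$ is at most $(4/3)^k$ times an exponentially small failure probability, yielding an $O(\rho^k)$ term for some $\rho<1$; summing over $k\leq K$ gives a finite constant that, together with the finitely many ``early'' bad epochs $k<K_0$, forms $C_0$. Adding the exploration, matching, and consensus contributions completes the proof of the stated $O((M+\log\log T)\log T)$ bound.
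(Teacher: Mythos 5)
Your skeleton matches the paper's: decompose by epoch, bound $K\leq\log_{4/3}(T/c_3)+O(1)$ via the geometric exploitation lengths, charge the exploration/matching/consensus phases their deterministic lengths to get the $\left(M+2(c_1+c_2)\log K\right)K$ term, and show the exploitation phases contribute only a constant by bounding $\P(E_k)\cdot c_3(4/3)^k$. However, the substantive step has a genuine gap. You assert that the per-epoch exploration guarantee fails with probability ``polynomially small in $k$'' and that the joint failure over the last $k/2$ epochs is nonetheless exponentially small. For the exploration events this is backwards: the exploitation error event is contained in the \emph{union} $\bigcup_{\ell=\lceil k/2\rceil}^{k}E_{e,\ell}$ (every one of those exploration phases must succeed), and a union bound over $\Theta(k)$ events each of polynomial probability is still only polynomial in $k$. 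Multiplied by the exploitation length $c_3(4/3)^k$, that term diverges. The paper needs, and proves (Lemma \ref{lem: exploration}), that $\P(E_{e,k})\leq 3NMe^{-\frac{c_1}{6}k}$ is \emph{exponentially} small in $k$ --- this comes from the cumulative exploration sample count $T_e(k)=\Theta(k\log k)$ growing linearly in $k$, and the condition $c_1\geq 4$ is exactly what makes $e^{-c_1/12}\cdot\frac{4}{3}<1$ so that $\P(E_k)(4/3)^k$ is summable. Without this rate your constant $C_0$ does not exist.

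The second issue is the mechanism by which a $\gamma^{*}$-matching gets stored. You frame the exploitation failure as ``all $\Theta(k)$ epochs jointly fail to produce a near-optimal matching,'' i.e., one success suffices. But a single successful matching phase only certifies a $\gamma_\ell$-matching for the \emph{current} $\gamma_\ell$, which starts at $0$ after each reset and climbs by $\varepsilon_{k_w}=\frac{1}{1+\log k}$ per success; reaching $\gamma^{*}-\frac{\Delta}{4}$ therefore requires accumulating roughly $1+\log k$ successes within a single reset-to-reset window. The paper must (i) show the reset schedule leaves a full window of length at least $k/6$ inside the last $k/2$ epochs, and (ii) prove a concentration bound for the number of successes in that window. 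Point (ii) is not a textbook Chernoff bound because the trials are \emph{dependent}: once $\gamma_\ell$ exceeds $\gamma^{*}$ no matching exists in $G_\ell$ and the success probability drops to zero. This is precisely what Lemma \ref{lem:baharavineq} (the stopped-sum Hoeffding inequality) is for, and it is absent from your outline. Relatedly, your claim that the matching phase succeeds ``with failure probability polynomially small in $k$'' overstates Lemma \ref{Lem:MatchingTrial}, which only gives failure probability $\bar{\tau}/\lceil c_2\log(k+1)\rceil=O(1/\log k)$; the argument works because only a constant lower bound on the per-trial success probability is needed, not because that probability tends to one quickly.
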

\vspace{-.1cm}
The purpose of the epoch structure is to address the main challenge arising from having multiple players: coordinating between players without communication. To this end, the players in our algorithm try together to find  a $\gamma$-matching, where $\gamma$ is a mutual parameter that they can all update simultaneously but independently, obviating the need for a central entity. Then the main measure of multi-player ``problem hardness'' is the absorption time $\Bar{\tau}$ of the matching Markov chain (see Lemma \ref{lem:Convergence Lemma}), which is unknown. To find a matching we then need a matching phase with increasing length (to eventually surpass $\Bar{\tau}$), taken to be of length $\lceil c_2\log (k+1) \rceil$ for the $k$-th phase, which alone contributes $O(\log\log T \log T)$ to the total expected regret.  Hence, the coordination challenge dominates the regret. 

The additive constant $C_0$ (see \eqref{eq:23}) is essentially the total regret accumulated during the initial epochs when the confidence intervals were still not small enough compared to the gap $\Delta=\underset{n}{\min}\ \underset{i\neq j}{\min}\left|\mu_{n,i}-\mu_{n,j}\right|$ (formalized in \eqref{eq:12-1}) or when the length of the matching phase was not long enough compared to $\Bar{\tau}$ (formalized in \eqref{eq:15}). Hence, $C_0$ depends on  $\Delta$  and $\Bar{\tau}$. In a simplified scenario when $\Delta$  and $\Bar{\tau}$ are known or can be bounded, the lengths of the exploration and matching phase can be made constant  ($c_1$ and $c_2$) and the confidence intervals in \eqref{eq:10} can be set to $\frac{\Delta}{4}$.  Note that with constant length phases, the $\log\frac{T_{e}\left(k\right)}{5M}$ in \eqref{eq:11} would be replaced with $\frac{M}{\Delta^2}$ and the  $\log (\frac{k}{2}+1)$ in \eqref{eq:15} replaced with a constant. Then $c_1,c_2$ can be chosen such that $C_0=0$, by making the probability in \eqref{eq:11} vanish faster than $\left( \frac{3}{4} \right)^k$ and satisfying \eqref{eq:15}. This amounts to choosing $c_1=O\left(\frac{M}{\Delta^2}+M^2\right)$ and $c_2=O(\Bar{\tau})$, which makes our regret bound in \eqref{eq:4} become $O\left((\frac{M}{\Delta^{2}}+M^2+\Bar{\tau})\log T\right)$, since the $\log\log T $ becomes 1 with constant length phases. Nevertheless, this issue is mainly theoretical since in practice, it is easy to choose large enough $c_1,c_2$ such that $C_0$ is very small across various experiments, as can be seen in our simulations in Section \ref{sec:Simulations}.
\vspace{-.2cm}
\section{Exploration Phase \label{sec:Exploration}}
\vspace{-.1cm}
Over time, players receive stochastic rewards from different arms and
average them to estimate their expected reward for each arm. In each
epoch, only $\lceil c_{1}\log (k+1) \rceil$ turns are dedicated to exploration. However,
the estimation of the expected rewards uses all the previous exploration phases, so the
number of samples for estimation at epoch $k$ is $\Theta(k \log k) $. Since players
only have estimates of the expected rewards, they can never be
sure if a matching is a $\gamma$-matching. The purpose of the exploration
phase is to help the players become more confident over time that the matchings
they converge to in the matching phase are indeed $\gamma$-matchings.

\begin{algorithm}[!t]
\caption{\label{alg:FairBandits}My Fair Bandit Algorithm}
\textbf{Initialization}: Set $V_{n,i}=0$ and $s_{n,i}=0$ for all
$i$. Set reset counter $w=0$ with expiration
$e_{w}=1$. Let $\varepsilon_{0}= 1$.

\textbf{For each epoch $k=1,2,\hdots$}
\begin{enumerate}
\item \textbf{Exploration Phase:}
\begin{enumerate}
\item For the next $\lceil c_{1}\log (k+1)\rceil$ turns:
\begin{enumerate}
\item Play an arm $i$ uniformly at random from all $M$ arms. 
\item Receive $r_{n,i}\left(t\right)$ and set $\eta_{i}\left(\boldsymbol{a}\left(t\right)\right)=0$
if $r_{n,i}\left(t\right)=0$ and $\eta_{i}\left(\boldsymbol{a}\left(t\right)\right)=1$
otherwise. 
\item If $\eta_{i}\left(\boldsymbol{a}\left(t\right)\right)=1$ then update
$V_{n,i}=V_{n,i}+1$ and $s_{n,i}=s_{n,i}+r_{n,i}\left(t\right)$.
\end{enumerate}
\item Estimate the expectation of arm $i$ as $\mu_{n,i}^{k}=\frac{s_{n,i}}{V_{n,i}}$
for each $i=1,...,M$.
\item Construct confidence intervals for each $\mu_{n,i}^{k}$ as $C_{n,i}^{k}=\sqrt{\frac{M}{\log V_{n,i}}}$. 
\end{enumerate}
\item \textbf{Matching Phase: }
\begin{enumerate}
\item Update $w\leftarrow w+1.$ If $w=e_{w}$ then set $\gamma_{k}=0$,
$w=0$, $e_{w}=\left\lceil \frac{k}{3}\right\rceil $ and update
$\varepsilon_k=\frac{1}{1+\log k}$. If $w<e_{w}$ then $\varepsilon_k=\varepsilon_{k-1}$.
\item Let $\mathcal{E}_{n}^{k}=\left\{ i\,|\,\mu_{n,i}^{k}\geq\gamma_{k}-C_{n,i}^{k}\right\} $. 
\item Pick $a_{n}\left(t\right)$ uniformly at random from $\mathcal{E}_{n}^{k}$.
\item For the next $\lceil c_{2}\log (k+1) \rceil$ turns: 
\begin{enumerate}
\item If $\eta_{a_{n}\left(t\right)}\left(\boldsymbol{a}\left(t\right)\right)=1$ then
keep playing the same arm, that is $a_{n}\left(t+1\right)=a_{n}\left(t\right)$.
\item If $\eta_{a_{n}\left(t\right)}\left(\boldsymbol{a}\left(t\right)\right)=0$
then pick $a_{n}\left(t+1\right)$ uniformly at random from $\mathcal{E}_{n}^{k}$.
\end{enumerate}
\item Set $\widetilde{a}_{k,n}=a_{n}\left(t\right)$. 
\end{enumerate}
\item \textbf{Consensus Phase:}
\begin{enumerate}
\item If $\eta_{\widetilde{a}_{k,n}}\left(\widetilde{\boldsymbol{a}}_{k}\right)=1$ then play $\widetilde{a}_{k,n}$ for $M$ turns. 
\item If $\eta_{\widetilde{a}_{k,n}}\left(\widetilde{\boldsymbol{a}}_{k}\right)=0$ then play $a_{n}=1,...,M$ sequentially. 
\item \textit{Matching was found:} If you did not experience a collision in
the last $M$ turns, set $\gamma_{k+1}=\gamma_{k}+\varepsilon_k$ and $S_k=1$, else set $\gamma_{k+1} = \gamma_k$, $S_k=0$.
\end{enumerate}
\item \textbf{Exploitation Phase}: For $\left\lceil c_{3}\left(\frac{4}{3}\right)^{k}\right\rceil$
turns, play $\widetilde{a}_{k^{*},n}$ for the maximal $k^{*}$ such that \vspace{-.1cm}
\[
k^{*}\in\underset{\left\lceil \frac{k}{2}\right\rceil \leq\ell\leq k}{\arg\max}\ \gamma_{\ell}S_\ell
.\]
\end{enumerate}
\textbf{End}
\end{algorithm}
In our exploration phase each player picks an arm uniformly at random.
This type of exploration phase is common in various multi-player bandit
algorithms \cite{Rosenski2016,bistritz2018distributed}. However, the nature of what the players
are trying to estimate is different. With a sum of rewards objective, players just need to improve over time the accuracy of the estimation of the expected rewards. With max-min fairness each player needs to make a hard (binary) decision whether a certain arm has expected reward above or below $\gamma$. After the confidence intervals become small enough, if the estimations do fall within their confidence intervals, players can be confident about this hard decision. Under this success event, where the confidence intervals are small enough,
a matching $\boldsymbol{a}$ is a $\gamma$-matching if all players observe that $\mu_{n,a_n}^{k}\geq\gamma-C_{n,a_n}^{k}$. The next lemma bounds the probability that this success event
does not occur, so the estimation for epoch $k$
failed.
\begin{lem}[Exploration Error Probability]
\label{lem: exploration} Let $\left\{ \mu_{n,i}^{k}\right\} $ be
the estimated reward expectations using all the exploration phases
up to epoch $k$, with confidence intervals $\left\{ C_{n,i}^{k}\right\} $.
Define the minimal gap by
\begin{equation}
\Delta \triangleq \underset{n}{\min}\ \underset{i\neq j}{\min}\left|\mu_{n,i}-\mu_{n,j}\right|.\label{eq:5}
\end{equation}
Define the $k$-th exploration error event as 
\begin{equation}
E_{e,k}=\left\{ \exists n,i\,\bigg|\,\left|\mu_{n,i}^{k}-\mu_{n,i}\right|\geq C_{n,i}^{k}\textnormal{ or }C_{n,i}^{k}\ensuremath{\geq}\frac{\Delta}{4}\right\} .\label{eq:6}
\end{equation}
Then for all $k>k_0$ for a large enough constant $k_0$ we have
\begin{equation}
\P\left(E_{e,k}\right)\leq 3NMe^{-\frac{c_{1}}{6}k}.\label{eq:7}
\end{equation}
\end{lem}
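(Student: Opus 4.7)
The plan is to union-bound $E_{e,k}$ over the $NM$ pairs $(n,i)$, splitting the per-pair bad event into two sub-events: (i) the Hoeffding concentration failure $\{|\mu_{n,i}^k-\mu_{n,i}|\geq C_{n,i}^k\}$ and (ii) the radius failure $\{C_{n,i}^k\geq\Delta/4\}$. Both depend on the random number of successful (collision-free) samples $V_{n,i}$ accumulated across all exploration phases up to epoch $k$, so I would first control $V_{n,i}$ from below.

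In each exploration turn every player picks an arm uniformly at random and independently of the others, so the probability that player $n$ plays arm $i$ without colliding is $p=\frac{1}{M}(1-\frac{1}{M})^{N-1}\geq \frac{1}{eM}$ (using $N\leq M$ and $(1-1/M)^{M-1}\geq 1/e$). Writing $T_e(k)=\sum_{j=1}^{k}\lceil c_1\log(j+1)\rceil$ for the cumulative exploration budget up to epoch $k$, $V_{n,i}\sim\mathrm{Bin}(T_e(k),p)$, and a multiplicative Chernoff bound gives $\P(V_{n,i}<T_e(k)/(5M))\leq \exp\!\bigl(-c\,T_e(k)/M\bigr)$ for an absolute constant $c>0$. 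Since $T_e(k)=\Omega(c_1 k\log k)$, this upper bound is at most $e^{-c_1 k/6}$ once $\log k\gtrsim M$, i.e., for all $k>k_0$ for some $k_0=k_0(M)$.

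On the good event $\{V_{n,i}\geq T_e(k)/(5M)\}$, Hoeffding applied conditionally on $V_{n,i}=v$ yields
\[
\P\!\left(\left|\mu_{n,i}^k-\mu_{n,i}\right|\geq \sqrt{M/\log v}\,\Big|\,V_{n,i}=v\right)\leq 2\,e^{-2vM/\log v}.
\]
Using the monotonicity of $v\mapsto v/\log v$ for $v$ large, the exponent is at least $\tfrac{2T_e(k)}{5\log T_e(k)}\geq c_1 k/3$ for $k>k_0$ (enlarging $k_0$ if necessary). For the radius event, $C_{n,i}^k\geq\Delta/4$ is equivalent to $V_{n,i}\leq e^{16M/\Delta^2}$; further enlarging $k_0$ so that $T_e(k_0)/(5M)>e^{16M/\Delta^2}$, this event is contained in $\{V_{n,i}<T_e(k)/(5M)\}$ and is therefore controlled by the Chernoff bound above. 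A union bound over the $NM$ pairs $(n,i)$ then yields $\P(E_{e,k})\leq NM\bigl(2e^{-c_1 k/3}+2e^{-c_1 k/6}\bigr)\leq 3NMe^{-c_1 k/6}$, as claimed.

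The main point of care is that the confidence radius $\sqrt{M/\log V_{n,i}}$ is itself a random function of $V_{n,i}$; conditioning on $V_{n,i}=v$ and then removing the conditioning via the Chernoff lower bound cleanly separates the two sources of randomness. The remaining work is pure bookkeeping: choosing $k_0(M,\Delta)$ (independent of $T$) large enough to simultaneously absorb the Chernoff and Hoeffding burn-ins and push the deterministic radius $\sqrt{M/\log(T_e(k_0)/(5M))}$ below $\Delta/4$.
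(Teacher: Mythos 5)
Your proof is correct and follows essentially the same route as the paper's: lower-bound the collision-free sample counts $V_{n,i}$ by concentration of Bernoulli sums, apply Hoeffding conditionally on $V_{n,i}$, and note that the radius event $C_{n,i}^{k}\geq\frac{\Delta}{4}$ is deterministically excluded once $V_{n,i}\geq T_{e}(k)/(5M)$ and $k$ exceeds a constant $k_{0}(M,\Delta)$. The only differences are cosmetic — you use a multiplicative Chernoff bound and a per-pair union bound where the paper uses additive Hoeffding and works with $V_{m}=\min_{n,i}V_{n,i}$ — and your conditioning argument correctly isolates the two sources of randomness exactly as the paper's appeal to the law of total probability does.
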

\begin{proof}
After the $k$-th exploration phase, the estimation of the expected
rewards is based on $T_{e}\left(k\right)$ samples, and
\begin{equation}
T_{e}\left(k\right)\geq c_{1}\sum_{i=1}^{k}\log (i+1)\geq c_{1}\frac{k}{2}\log\frac{k}{2}.\label{eq:8}
\end{equation}
Let $A_{n,i}\left(t\right)$ be the indicator that is equal to
one if only player $n$ chose arm $i$ at time $t$.  Define $V_{n,i}$,
as the number of visits of player $n$ to arm $i$ with no collision
up to time $t$, and $V_m=\underset{n,i}{\min}\ V_{n,i}$.
The exploration phase consists of uniform and independent arm
choices, so 
$\P\left(A_{n,i}\left(t\right)=1\right)=\frac{1}{M}\left(1-\frac{1}{M}\right)^{N-1}$.\label{eq:12}
We show that each player pulls each arm many times without collisions. Formally:
\begin{align}
\P\bigg(V_m<\frac{T_{e}\left(k\right)}{5M}\bigg)&= \P\left(\bigcup_{i=1}^{M}\bigcup_{n=1}^{N}\left\{ V_{n,i}<\frac{T_{e}\left(k\right)}{5M}\right\} \right)\nonumber\\
&\underset{\left(a\right)}{\leq} NM\P\left(V_{1,1}<\frac{T_{e}\left(k\right)}{5M}\right)\nonumber\\
&\underset{\left(b\right)}{\leq} NMe^{-2\frac{1}{M^{2}}\left(\left(1-\frac{1}{M}\right)^{N-1}-\frac{1}{5}\right)^{2}T_{e}\left(k\right)}\nonumber\\
&\underset{\left(c\right)}{\leq}NMe^{-\frac{1}{18M^{2}}T_{e}\left(k\right)}
\label{eq:9}
\end{align}
where (a) is a union bound, (b) is Hoeffding's inequality for
Bernoulli random variables and (c) follows since $M\geq N$ and $\left(1-\frac{1}{M}\right)^{M-1}-\frac{1}{5}\geq e^{-1}-\frac{1}{5}>\frac{1}{6}$.
By Hoeffding's inequality for random variables \cite{hoeffding1994probability} on $\left[0,1\right]$
\begin{align*}
&\P\left(\bigcup_{n=1}^{N}\bigcup_{i=1}^{M}\left\{ \left|\mu_{n,i}^{k}-\mu_{n,i}\right|\geq C_{n,i}^{k}\,\right\} \bigg|\,\left\{ V_{n,i}\right\} \right)\\&\hspace{.5cm}\leq\sum_{n=1}^{N}\sum_{i=1}^{M}2e^{-2V_{n,i}\left(C_{n,i}^{k}\right)^{2}}\underset{(a)}{\leq}2NMe^{-2\frac{MV_{m}}{\log V_{m}}}\numberthis \label{eq:10}
\end{align*}
where (a) uses $C_{n,i}^{k}=\sqrt{\frac{M}{\log V_{n,i}}}$ and $V_m\ge 3$. Now note that for all $k>k_0$ for a sufficiently large $k_0$:
\begin{equation}
\sqrt{\frac{M}{{\log\left(\frac{T_{e}(k)}{5M}\right)}}}\leq\sqrt{\frac{M}{{\log\left(\frac{c_{1}\frac{k}{2}\log\frac{k}{2}}{5M}\right)}}}<\frac{\Delta}{4}\label{eq:12-1}
\end{equation}

and therefore $V_m\geq\frac{T_{e}\left(k\right)}{5M}$ implies $\underset{n,i}{\max}\ C_{n,i}^{k}<\frac{\Delta}{4}$. Hence, given \eqref{eq:12-1} the event $\bigcup_{n=1}^{N}\bigcup_{i=1}^{M}\left\{ \left|\mu_{n,i}^{k}-\mu_{n,i}\right|\geq C_{n,i}^{k}\,\right\}$ coincides with $E_{e,k}$, so for all $k>k_0$:
\begin{align*}
    \P&\left(E_{e,k}\bigg|\ V_{m}\geq\frac{T_{e}\left(k\right)}{5M}\right)\\
    &= \P\left(\bigcup_{n=1}^{N}\bigcup_{i=1}^{M}\left\{ \left|\mu_{n,i}^{k}-\mu_{n,i}\right|\geq C_{n,i}^{k}\,\right\} \bigg|\,V_{m}\geq\frac{T_{e}\left(k\right)}{5M}\right)\\
    &\underset{\left(a\right)}{\leq}2NMe^{-2\frac{\frac{T_{e}\left(k\right)}{5}}{\log\frac{T_{e}\left(k\right)}{5M}}}\numberthis\label{eq:12-2}
\end{align*}
where (a) uses the law of total probability with respect to $\{V_{n,i}\}$  with Bayes's rule on $\{V_{m}\geq\frac{T_{e}\left(k\right)}{5M}\}$, using the bound in \eqref{eq:10}.
We conclude that for all $k>k_0$:
\begin{align*}
\P\big(E_{e,k}\big)
&=\P\left(E_{e,k}|\ V_m < \frac{T_{e}\left(k\right)}{5M}\right) \P\left(V_m<\frac{T_{e}\left(k\right)}{5M}\right) \\
&\hspace{.4cm}+\P\left(E_{e,k}|\ V_m\geq\frac{T_{e}\left(k\right)}{5M}\right) \P\left(V_{m}\geq\frac{T_{e}\left(k\right)}{5M}\right)\\
&\leq\P\left(V_m<\frac{T_{e}\left(k\right)}{5M}\right)
+\P\left(E_{e,k}|\ V_m\geq\frac{T_{e}\left(k\right)}{5M}\right)\\
&\underset{(a)}{\leq} NMe^{-\frac{T_{e}\left(k\right)}{18M^{2}}} + 2NMe^{-\frac{2T_{e}\left(k\right)}{5\log\frac{T_{e}\left(k\right)}{5M}}}\numberthis
\label{eq:11}
\end{align*}
where (a) uses \eqref{eq:9} and \eqref{eq:12-2}. Finally, \eqref{eq:7} follows by using  \eqref{eq:8} in \eqref{eq:11} for a sufficiently large $k$.
\end{proof}
\section{Matching Phase\label{sec:Matching}}
In this section we analyze the matching phase, where the goal is to distributedly
find $\gamma$-matchings based on the estimated expected rewards from
the exploration phase. We conclude by upper bounding
the probability that an optimal  $\gamma^{*}$-matching
is not played during the exploitation phase. During the matching phase,
the rewards of the arms are ignored, as only the binary decision of
whether an arm is better or worse than $\gamma$ matters. These binary
decisions induce the following bipartite graph between the $N$ players and $M$ arms:
\begin{defn}
Let $G_k$ be the bipartite graph where edge $\left(n,i\right)$
exists if and only if $\mu_{n,i}^{k}\geq\gamma_{k}-C_{n,i}^{k}$. 
\end{defn}
During the $k$-th matching phase, players follow our dynamics to switch arms in order to
find a $\gamma_{k}$-matching in $G_{k}$. These $\gamma_{k}$-matchings (up to confidence intervals)
are absorbing states in the sense that players stop switching arms
if they are all playing a $\gamma_{k}$-matching. The dynamics of
the players induce the following Markov
chain:
\begin{defn}
\label{Def:MatchingDynamics}Define $\mathcal{E}_{n}^{k}=\left\{ i\,|\,\mu_{n,i}^{k}\geq\gamma_{k}-C_{n,i}^{k}\right\} $.
The transition into $\boldsymbol{a}\left(t+1\right)$
is dictated by the transition of each player $n$: 
\begin{enumerate}
\item If $\eta_{a_{n}\left(t\right)}\left(\boldsymbol{a}\left(t\right)\right)=1$
then $a_{n}\left(t+1\right)=a_{n}\left(t\right)$ with probability
1.
\item If $\eta_{a_{n}\left(t\right)}\left(\boldsymbol{a}\left(t\right)\right)=0$
then $a_{n}\left(t+1\right)=i$ with probability $\frac{1}{\left|\mathcal{E}_{n}^{k}\right|}$
for all $i\in\mathcal{E}_{n}^{k}$.
\end{enumerate}
\end{defn}
Note that the matchings in $G_{k}$ are $\gamma_{k}$-matchings only when the confidence intervals are small enough. 
Next we prove that if a matching exists in $G_{k}$, then
the matching phase will find it with a probability
that goes to one. However, we do not need this probability
to converge to one, but simply to exceed a large enough constant.
\begin{lem}
\label{Lem:MatchingTrial} Let $\mathcal{G}_{N,M}$ be the set of all
bipartite graphs with $N$ left vertices and $M$ right vertices that
have a matching of size $N$. Define the random variable $\tau(G,\boldsymbol{a}(0))$
as the first time the process of Definition \ref{Def:MatchingDynamics}, $\{\boldsymbol{a}(t)\}$,
constitutes a matching of size $N$, starting from $\boldsymbol{a}(0)$. Define

\begin{equation}
\Bar{\tau}=\max_{G\in\mathcal{G}_{N,M}, \boldsymbol{a}(0)}\mathbb{E}\left\{ \tau\left(G,\boldsymbol{a}(0)\right)\right\} .
\end{equation}
If $G_{k}$ permits a matching then the $k$-th matching phase converges
to a matching with probability $p\geq 1-\frac{\Bar{\tau}}{\lceil c_{2}\log (k+1)\rceil}$. 
\end{lem}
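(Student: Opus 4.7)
The plan is to apply Markov's inequality to the first-passage time $\tau(G_k, \boldsymbol{a}(0))$. The starting observation is that any matching of size $N$ in $G_k$ is an absorbing state of the chain of Definition \ref{Def:MatchingDynamics}: when all $N$ players occupy pairwise distinct arms, every $\eta_{a_n(t)}(\boldsymbol{a}(t))$ equals $1$, so by the first clause of the dynamics each player deterministically plays the same arm at the next turn. Consequently, the $k$-th matching phase ends with $\widetilde{\boldsymbol{a}}_k$ being a matching of size $N$ if and only if $\tau(G_k,\boldsymbol{a}(0)) \le \lceil c_2 \log(k+1)\rceil$.

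Next I would check that $\bar{\tau}$ is finite, which is what makes Markov's inequality useful. The chain lives on a finite state space (at most $M^N$ strategy profiles) and $\mathcal{G}_{N,M}$ is a finite set, so it suffices to verify that for every $G \in \mathcal{G}_{N,M}$ and every starting profile, the chain reaches a matching with probability one. Fix a matching $\boldsymbol{m}$ in $G$, so $m_n \in \mathcal{E}_n^k$ for each $n$. At every non-matching state some players collide and independently re-randomize uniformly over their feasible sets $\mathcal{E}_n^k$; in particular, the event that every currently colliding player jumps to her $\boldsymbol{m}$-coordinate has probability at least $\prod_n 1/|\mathcal{E}_n^k|>0$. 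Since $\boldsymbol{m}$ assigns distinct arms to all players, players already sitting on their $\boldsymbol{m}$-coordinate cannot be displaced by this jump. Iterating this construction produces a finite-length positive-probability trajectory from any $\boldsymbol{a}(0)$ to $\boldsymbol{m}$, and hence $\mathbb{E}\{\tau(G,\boldsymbol{a}(0))\}<\infty$ for every admissible pair, yielding $\bar{\tau}<\infty$.

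With $\bar{\tau}<\infty$ in hand, Markov's inequality applied to the nonnegative random variable $\tau(G_k,\boldsymbol{a}(0))$ closes the argument:
\[
\P\bigl(\tau(G_k,\boldsymbol{a}(0))>\lceil c_2\log(k+1)\rceil\bigr) \le \frac{\mathbb{E}\{\tau(G_k,\boldsymbol{a}(0))\}}{\lceil c_2\log(k+1)\rceil} \le \frac{\bar{\tau}}{\lceil c_2\log(k+1)\rceil}.
\]
Taking complements converts this into the claimed lower bound on the probability that the $k$-th matching phase absorbs into a matching.

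The main subtlety lies in the reachability step used to guarantee $\bar{\tau}<\infty$: if a colliding player attempts to move to her matching arm $m_n$ while a non-colliding player occupies that arm, a new collision is induced rather than the chain landing on $\boldsymbol{m}$ in a single step. I would handle this by iterating the ``all currently colliding players jump to their $\boldsymbol{m}$-coordinate'' move and tracking a simple potential such as the number of players currently aligned with $\boldsymbol{m}$, which is non-decreasing under this move and strictly increases until $\boldsymbol{m}$ is reached; every intermediate transition has positive probability, so $\boldsymbol{m}$ is attained after finitely many rounds with positive probability, from which finiteness of $\mathbb{E}\{\tau\}$ over the finite state space follows at once.
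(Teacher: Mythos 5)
Your proposal is correct and follows essentially the same route as the paper's proof: fix a matching in $G_k$, use the number of players aligned with it as a potential that increases with positive probability from any non-matching state (you move all colliding players at once where the paper moves one, an immaterial difference), conclude that the chain is absorbing with finite expected absorption time, and finish with Markov's inequality. Your explicit verification that $\bar{\tau}<\infty$ is a welcome elaboration of a step the paper leaves implicit, but it is not a different argument.
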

\begin{proof}
We start by noting that the process $\boldsymbol{a}\left(t\right)$
that evolves according to the dynamics in Definition \ref{Def:MatchingDynamics}
is a Markov chain. This follows since all transitions are a function
of $\boldsymbol{a}\left(t\right)$ alone, with no dependence on $\boldsymbol{a}\left(t-1\right),...,\boldsymbol{a}\left(0\right)$
given $\boldsymbol{a}\left(t\right)$. Let $\mathcal{\mathcal{M}}$ be a matching in $G_{k}$.
Define $\Phi_{\mathcal{M}}\left(\boldsymbol{a}\right)$ to be the number of players
that are playing in $\boldsymbol{a}$ the arm they are matched to
in $\mathcal{M}$. Observe the process $\Phi_{\mathcal{M}}\left(\boldsymbol{a}\left(t\right)\right)$.
If there are no colliding players, then $\boldsymbol{a}\left(t\right)$
is a matching (potentially different from $\mathcal{M}$) and no player will ever change
their chosen arm. Otherwise, for every collision, at least one of the colliding
players is not playing their arm in $\mathcal{M}$. There is a positive probability
that this player will pick their arm in $\mathcal{M}$ at random and all other players will
stay with the same arm. Hence, if $\boldsymbol{a}\left(t\right)$
is not a matching, then there is a positive probability that $\Phi_{\mathcal{M}}\left(\boldsymbol{a}\left(t+1\right)\right)=\Phi_{\mathcal{M}}\left(\boldsymbol{a}\left(t\right)\right)+1$.
We conclude that every non-matching $\boldsymbol{a}$
has a positive probability path to a matching, making $\boldsymbol{a}\left(t\right)$ an absorbing
Markov chain with the matchings as the absorbing states. By Markov's
inequality
\begin{align*} 
\P\big(\tau&\left(G_{k},\boldsymbol{a}(0)\right)\geq \lceil c_{2}\log (k+1)\rceil\big) \numberthis
\\
& \hspace{.5cm} \leq \frac{\mathbb{E}\left\{ \tau\left(G_{k},\boldsymbol{a}(0)\right)\right\} }{\lceil c_{2}\log (k+1)\rceil} 
\leq\frac{\Bar{\tau}}{\lceil c_{2}\log (k+1)\rceil}. \qedhere
\end{align*}\label{eq:13}
\end{proof}
Intriguingly, the Bernoulli trials stemming from trying to find a matching in $\{G_\ell\}$ over consecutive epochs are \textit{dependent}, as after enough successes, there will no longer be a matching in $G_\ell$, yielding success probability 0. The next Lemma shows that Hoeffding's inequality for binomial random variables still applies as long as there are few enough successes, such that there is still a matching in $G_k$. 

\begin{lem} \label{lem:baharavineq}
Consider a sequence of i.i.d. Bernoulli random variables $X_{1},\hdots,X_L$ with success probability p (or at least $p$ for each trial). 
For $x < Lp$, consider $S_x = \sum_{i=1}^L X_i\mathds{1} \{ \sum_{j<i} X_j < x \}$. Then
\begin{equation}
    \P(S_x < x) \le e^{ -2L\left(p-\frac{x}{L}\right)^2}\label{eq:14}.
\end{equation}
\end{lem}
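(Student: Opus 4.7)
The plan is to reduce the clipped sum $S_x$ to an ordinary partial sum and then apply Hoeffding's inequality. The clever part of the statement is that, while the summands in $S_x$ can be turned off mid-stream, what we care about is only a one-sided bound on $S_x$. So my first move is to establish the identity
\[
\{S_x < x\} = \Bigl\{\sum_{i=1}^{L} X_i < x\Bigr\}.
\]
To see this, consider two cases. If $\sum_{i=1}^{L} X_i < x$, then every partial sum $\sum_{j<i}X_j$ stays strictly below $x$, so every indicator is $1$ and $S_x = \sum_{i=1}^L X_i < x$. If on the other hand $\sum_{i=1}^{L} X_i \ge x$, let $i^{*}$ be the index at which the running total of successes first reaches $x$. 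At that index the indicator is still on (since the partial sum \emph{before} $i^{*}$ equals $x-1$), the summand $X_{i^{*}}$ equals $1$, and the indicator shuts off for every later $i$. Hence $S_x = x$ exactly. So $S_x < x$ occurs iff the total number of successes in all $L$ trials is itself below $x$.

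Step two is to apply Hoeffding's inequality to $\sum_{i=1}^{L}X_i$. Since $x < Lp$, we have
\[
\P\Bigl(\sum_{i=1}^{L}X_i < x\Bigr) = \P\Bigl(\tfrac{1}{L}\sum_{i=1}^{L}X_i - p < \tfrac{x}{L}-p\Bigr) \le e^{-2L(p-x/L)^2},
\]
which, combined with Step 1, gives exactly the claimed bound \eqref{eq:14}.

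For the ``or at least $p$'' extension, which is the version actually needed in the paper (where the per-epoch success probability may depend on past successes once the $\gamma_k$-matching structure shifts), I would couple the dependent trials to an i.i.d.\ Bernoulli$(p)$ sequence. Specifically, on the event that fewer than $x$ successes have occurred so far, the conditional success probability is at least $p$, so by the standard coupling of Bernoullis one can construct i.i.d.\ $\tilde X_i \sim \mathrm{Bernoulli}(p)$ on a common space with $X_i \ge \tilde X_i$ whenever the indicator $\mathds 1\{\sum_{j<i}X_j<x\}=1$. The corresponding $\tilde S_x$ then satisfies $\tilde S_x \le S_x$, so $\{S_x<x\}\subseteq\{\tilde S_x<x\}$, and Steps 1 and 2 applied to the i.i.d.\ $\tilde X_i$ finish the argument.

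The only mild obstacle is the observation in Step 1 — it is the whole point of the lemma, since without it one would be tempted to bound $S_x$ using martingale concentration (which would give worse constants and extra log factors). Once one notices that clipping at $x$ successes and asking for $S_x<x$ is equivalent to asking about the uncapped sum, the remainder is a direct application of Hoeffding, and the coupling argument for the dependent case is routine.
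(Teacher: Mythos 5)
Your proof is correct and follows essentially the same route as the paper's: observe that $\{S_x < x\}$ forces $\sum_{i=1}^L X_i < x$ (the paper only proves this inclusion by contradiction, while you establish the full set equality), then apply Hoeffding to the uncapped sum. Your additional coupling argument for the ``at least $p$'' case is a reasonable elaboration of a point the paper leaves implicit, but the core idea is identical.
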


\begin{proof}

If $S_x=m<x$ then $\sum_{i=1}^L X_i < x$, as otherwise the indicators in $S_x$ of the first $x$ indices $i$ where $X_i=1$ will be active, and so $S_x \ge x$, contradicting $S_x=m<x$. Therefore

\begin{align*}
    \P(S_x<x) &\le \P\left(\sum_{i=1}^L X_i <x\right) \le e^{ -2L\left(p-\frac{x}{L}\right)^2}. \qedhere
\end{align*}
\end{proof}
We conclude this section by proving the main Lemma used to prove Theorem
\ref{Main Theorem}. The idea of the proof is to show that if the
past $\frac{k}{2}$ exploration phases succeeded, and enough matching
trials succeeded, then a $\gamma^{*}$-matching was found
within the last $\frac{k}{2}$ matching phases. This ensures that a $\gamma^{*}$-matching is
played during the $k$-th exploitation phase.
\begin{lem}[Exploitation Error Probability]
\label{lem:Convergence Lemma}Define the $k$-th exploitation error
event $E_{k}$ as the event where the actions $\boldsymbol{\tilde{a}}_{k^{*}}$
played in the $k$-th exploitation phase are not a $\gamma^{*}$-matching.
Let $k_0$ be large enough such that for all $k>k_0$
\begin{equation}
\varepsilon_{\lceil\frac{k}{2}\rceil}<\frac{\Delta}{4}\textnormal{ and   }1-\frac{\Bar{\tau}}{\lceil c_{2}\log (\frac{k}{2}+1)\rceil} - \frac{1 + \log k}{k/6}\ge \frac{3}{\sqrt{10}} .\label{eq:15}
\end{equation}
Then for all $k>k_0$ we have
\begin{equation}
\P\left(E_{k}\right)\leq 7NMe^{-\frac{c_{1}}{12}k}+e^{-\frac{3k}{10}}.\label{eq:16}
\end{equation}
\end{lem}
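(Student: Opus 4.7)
The plan is to decompose the exploitation error event via the ``good exploration'' event $F_k = \bigcap_{\ell = \lceil k/2\rceil}^{k} E_{e,\ell}^{c}$. This gives $\P(E_k) \le \P(F_k^c) + \P(E_k \cap F_k)$, and the two summands will respectively produce the two terms on the right-hand side of \eqref{eq:16}. The argument hinges on showing that under $F_k$, once $\gamma_\ell$ has had time to climb to a value within $\Delta/4$ of $\gamma^*$, the matching $\widetilde{\boldsymbol{a}}_{k^*}$ selected by $k^* \in \arg\max_{\lceil k/2\rceil\le\ell\le k}\gamma_\ell S_\ell$ must itself be a $\gamma^*$-matching.

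For the first term I would apply Lemma~\ref{lem: exploration} epoch-by-epoch and sum geometrically. Using $c_1 \ge 4$ to control $e^{-c_1/6}$, one obtains
\[
\P(F_k^c) \le \sum_{\ell = \lceil k/2\rceil}^{k} 3NM\, e^{-c_1 \ell/6} \le \frac{3NM}{1-e^{-c_1/6}}\, e^{-c_1 k/12} \le 7NM\, e^{-c_1 k/12},
\]
which matches the first term of \eqref{eq:16}.

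For the second term, work under $F_k$, so that for every $\ell$ in the window $|\mu_{n,i}^\ell - \mu_{n,i}| < C_{n,i}^\ell < \Delta/4$. Two consequences are crucial. First, whenever $\gamma_\ell \le \gamma^*$, each arm of the true $\gamma^*$-matching satisfies $\mu_{n,i}^\ell \ge \mu_{n,i}-C_{n,i}^\ell \ge \gamma^*-C_{n,i}^\ell \ge \gamma_\ell - C_{n,i}^\ell$, so it lies in $\mathcal{E}_n^\ell$ and a matching exists in $G_\ell$; by Lemma~\ref{Lem:MatchingTrial} the $\ell$-th matching phase then succeeds with probability at least $p \ge 1 - \bar\tau/\lceil c_2\log(\ell+1)\rceil \ge 1-\bar\tau/\lceil c_2\log(k/2+1)\rceil$. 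Second, since the step size satisfies $\varepsilon_{\lceil k/2\rceil} < \Delta/4$, once $\gamma_\ell$ enters the interval $[\gamma^* - \Delta/4,\gamma^*]$ the combination of tight confidence intervals with the $\Delta$-gap between distinct expected rewards of a single player forces any matching the dynamics produce to have true min-value $\ge \gamma^*$, so the stored $\widetilde{\boldsymbol{a}}_\ell$ is a $\gamma^*$-matching, and in particular so is $\widetilde{\boldsymbol{a}}_{k^*}$.

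It remains to count how many consecutive matching successes inside the window are required. Since $\gamma^* \le 1$ and each success raises $\gamma$ by $\varepsilon \ge 1/(1+\log k)$, at most $x = 1 + \log k$ successes after a reset push $\gamma$ past $\gamma^*$. The reset rule $e_w = \lceil k/3\rceil$ guarantees that the window $[\lceil k/2\rceil, k]$ contains a reset-free sub-window of length $L \ge \lceil k/6\rceil$, during which the successes form a dependent Bernoulli process: once $\gamma$ exceeds the largest feasible value, further successes become impossible, which is exactly the regime of Lemma~\ref{lem:baharavineq}. Applying that lemma and invoking condition \eqref{eq:15} to secure $p - x/L \ge 3/\sqrt{10}$,
\[
\P(S_x < x) \le e^{-2L(p - x/L)^2} \le e^{-2(k/6)(9/10)} = e^{-3k/10},
\]
which is the second term of \eqref{eq:16}.

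The main obstacle I foresee is the ``forcing'' claim: justifying precisely that, under $F_k$ and once $\gamma_\ell$ has climbed close enough to $\gamma^*$, no matching in $G_\ell$ can have true min-value strictly smaller than $\gamma^*$. This requires a delicate comparison between the $\Delta$-gap (which only separates distinct rewards of a single player), the window $(\gamma_\ell - 2C, \gamma_\ell)$ of arms that could spuriously enter $\mathcal{E}_n^\ell$, and the step size $\varepsilon < \Delta/4$; everything else is bookkeeping of union bounds and the clean Hoeffding-type inequality of Lemma~\ref{lem:baharavineq}.
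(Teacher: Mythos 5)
Your proposal is correct and follows essentially the same route as the paper: the same decomposition over the good-exploration event $A_k=\bigcap_{\ell=\lceil k/2\rceil}^{k}\bar{E}_{e,\ell}$, the same union-bound-plus-geometric-sum for the first term, and the same reset-window argument combining Lemma~\ref{Lem:MatchingTrial}, the $\beta_{k_w}\le 1+\log k$ success count, and Lemma~\ref{lem:baharavineq} for the second term. The ``forcing'' step you flag as the main obstacle is handled in the paper exactly as you sketch it, via the chain $\mu_{n,a_n}\ge\gamma_\ell-2C_{n,a_n}^{\ell}\ge\gamma^*-\frac{\Delta}{4}-\varepsilon_{k_w}-2C_{n,a_n}^{\ell}>\gamma^*-\Delta$ at the last successful epoch, which forces $\mu_{n,a_n}\ge\gamma^*$ by the definition of the gap $\Delta$.
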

\begin{proof}
Define $E_{c,\ell}$ as the event where a matching existed in $G_{\ell}$ and was not found in the $\ell$-th matching phase. From Lemma \ref{Lem:MatchingTrial}
we know that if there is a matching in $G_{\ell}$, then the $\ell$-th
trial has success probability at least $1-\frac{\Bar{\tau}}{\lceil c_{2}\log (\ell+1)\rceil}$. 

Next we bound from below the number of trials we have between resets in order to find a $\gamma^{*}$ matching. We define $k_{w}\geq\left\lceil\frac{k}{2}\right\rceil$
as the first epoch since $\left\lceil\frac{k}{2}\right\rceil$ where a reset occurred (so
$\gamma_{k_{w}}=0$). In the worst case the algorithm resets in epoch $\left\lceil\frac{k}{2}\right\rceil-1$. Even still, the algorithm will reset again no later than
$k_{w}\leq \left\lceil\frac{k}{2}\right\rceil-1+ \left\lceil\frac{\left\lceil\frac{k}{2}\right\rceil-1}{3}\right\rceil \leq\frac{2}{3}k$.
The subsequent reset will then happen at $k_{w+1}$, where $k_{w+1}\leq\frac{2}{3}k+\left\lceil\frac{2}{9}k\right\rceil\leq\left\lceil\frac{8}{9}k\right\rceil<k$ for $k>9$.
We conclude that during the past $\frac{k}{2}$ epochs, there exists at least one full period (from reset to reset) with length at least
$\frac{k}{6}$.
Recall the definition of the $\ell$-th exploration error event $E_{e,\ell}$ in \eqref{eq:6}. Define the event $A_k=\bigcap_{\ell=\left\lceil \frac{k}{2}\right\rceil }^{k}\bar{E}_{e,\ell}$
for which $\bar{A}_k=\bigcup_{\ell=\left\lceil \frac{k}{2}\right\rceil }^{k}E_{e,\ell}$.
We define $\beta_{k_w}$ as the number of successful trials needed after reset $w$ to reach $\gamma_k \geq \gamma^{*}-\frac{\Delta}{4}$. Note that $\beta_{k_w} \leq 1 + \log k$ since no more than $1 + \log k$ steps of size $\varepsilon_{k_w} = \frac{1}{1+ \log k}$ are needed. 
Then for all $k>k_{0}$
\begin{align*}
\P(E_{k}\,|\,A_k)
&\underset{\left(a\right)}{\leq} \P\left(\gamma_{k}<\gamma^{*}-\frac{\Delta}{4}\,\bigg|\,A_k\right) \\
&\underset{\left(b\right)}{\leq}
\P\left(\sum_{\ell=k_{w}}^{k_{w+1}}\mathds{1}\left\{ \bar{E}_{c,\ell}\right\} <\beta_{k_w}\,\bigg|\,A_k\right) \\
&\underset{\left(c\right)}{\leq}
e^{-2\left(1-\frac{\Bar{\tau}}{\lceil c_{2}\log (\frac{k}{2}+1)\rceil}-\frac{1 +\log k}{k_{w+1} - k_w}\right)^{2}\left(k_{w+1}-k_{w}\right)}\\
&\underset{\left(d\right)}{\leq}e^{-\frac{3k}{10}} \numberthis \label{eq:17}
\end{align*}
where (a) follows since given $\bigcap_{\ell=\left\lceil \frac{k}{2}\right\rceil }^{k}\bar{E}_{e,\ell}$,
if $\gamma_{k}\geq\gamma^{*}-\frac{\Delta}{4}$ then a $\gamma^{*}$-matching
was found before the $k$-th exploitation phase and $E_{k}$ did not
occur. This follows since at the last success at $\ell\leq k$ we must have then had that for
all $n$
\begin{align*}
    \mu_{n,a_{n}}&\geq\mu_{n,a_{n}}^{\ell}-C_{n,a_{n}}^{\ell}\geq\gamma_{\ell}-2C_{n,a_{n}}^{\ell}\\
    &\geq \gamma^{*}-\frac{\Delta}{4}-\varepsilon_{k_w}-2C_{n,a_{n}}^{\ell}>\gamma^{*}-\Delta\numberthis \label{eq:18}
\end{align*}
which can only happen if $\mu_{n,a_{n}}\geq\gamma^{*}$. Inequality (b) in \eqref{eq:17} follows by noting that the probability that $\max_{\lceil \frac{k}{2}\rceil\le \ell \le k} \gamma_\ell < \gamma^{*}-\frac{\Delta}{4}$
with a constant step size (between resets) $\varepsilon_{k_w}$ implies fewer than $\left\lceil \frac{\gamma^{*}-\frac{\Delta}{4}}{\varepsilon_{k_w}}\right\rceil $
successful trials between $k_w$ and $k_{w+1}$.
Given $A_k$, in any trial $\ell \in [k_w,k_{w+1}]$ such that there have been fewer than $\left\lceil \frac{\gamma^{*}-\frac{\Delta}{4}}{\varepsilon_{k_w}}\right\rceil $ successes in $[k_w,\ell)$, at least one matching will exist in $G_\ell$ (an optimal matching $a_{n}^{*}$), since
\begin{equation}
\mu_{n,a_{n}^{*}}^{\ell}\geq\mu_{n,a_{n}^{*}}-C_{n,a_{n}^{*}}^{\ell}\geq\gamma^{*}-C_{n,a_{n}^{*}}^{\ell}\underset{(1)}{>}\gamma_{\ell}-C_{n,a_{n}^{*}}^{\ell}\label{eq:19}
\end{equation}
where (1) follows since for all $k>k_{0}$, $\varepsilon_{k_w}$ is sufficiently small such that $\gamma_{\ell}\leq\gamma^{*}-\frac{\Delta}{4}+\varepsilon_{k_w}<\gamma^{*}$.
Inequality (c) in \eqref{eq:17} follows from Lemma \ref{lem:baharavineq} with $p\triangleq1-\frac{\Bar{\tau}}{\lceil c_{2}\log (\frac{k}{2}+1)\rceil }$.
Inequality (d) follows from $k_{w+1}-k_{w}\geq\frac{k}{6}$ and \eqref{eq:15}.
Finally, \eqref{eq:16} is obtained by:
\begin{align*}
\P\left(E_{k}\right)&= \P\left(E_{k}\,|\bar{A}_{k}\right) \P\left(\bar{A}_{k}\right)+ \P\left(E_{k}\,|A_{k}\right) \P\left(A_{k}\right)\\
&\underset{\left(a\right)}\leq \bigg( 3NM\sum_{\ell=\left\lceil \frac{k}{2}\right\rceil }^{k}e^{-\frac{c_{1}}{6} \ell}\bigg)+e^{-\frac{3k}{10}}\\
&\underset{\left(b\right)}\leq 3NMe^{-\frac{c_{1}}{12}k} \left(\frac{1-e^{-\frac{c_{1}}{12}k}}{1-e^{-\frac{c_{1}}{6}}}\right)+e^{-\frac{3k}{10}}\\
&\leq 7NMe^{-\frac{c_{1}}{12}k}+e^{-\frac{3k}{10}}\numberthis \label{eq:21}
\end{align*}
where (a) is a union bound of $\bar{A}_k=\bigcup_{\ell=\left\lceil \frac{k}{2}\right\rceil }^{k}E_{e,\ell}$ using Lemma \ref{lem: exploration} together with \eqref{eq:17}, and (b) is a geometric sum. \qedhere
\end{proof}

\section{Numerical Simulations} \label{sec:Simulations}
We simulated two multi-armed bandit games with the following expected rewards matrices:
$$
 U_1=
\begin{bmatrix}
\frac{1}{2} & \frac{9}{10} & \frac{1}{10} & \frac{1}{4}\\[1pt]
\frac{1}{4} & \frac{1}{2} & \frac{1}{4} & \frac{1}{10}\\[1pt]
\frac{1}{10}  & \frac{1}{4} & \frac{1}{2} & \frac{1}{2} \\[1pt]
\frac{1}{10} & \frac{9}{10} & \frac{1}{4} & \frac{1}{2}
\end{bmatrix} 
$$
$$
U_2=\begin{bmatrix}
\frac{9}{10} & \frac{2}{5} & \frac{4}{5} & \frac{1}{10} & \frac{3}{10} & \frac{1}{20} & \frac{1}{5} & \frac{1}{10} & \frac{3}{10} & \frac{1}{5}\\[1pt]
\frac{4}{10} & \frac{3}{10} & \frac{3}{10} & \frac{1}{10} & \frac{1}{5} & \frac{3}{10} & \frac{2}{5} & \frac{2}{5} & \frac{3}{10} & \frac{2}{5}\\[1pt]
\frac{1}{10} & \frac{1}{20} & \frac{1}{10} & \frac{2}{5} & \frac{1}{10} & \frac{1}{5} & \frac{9}{10} & \frac{3}{10} & \frac{2}{5} & \frac{1}{10}\\[1pt]
\frac{1}{20} & \frac{1}{10} & \frac{9}{10} & \frac{1}{5} & \frac{9}{10} & \frac{3}{4} & \frac{1}{10} & \frac{9}{10} & \frac{1}{4} & \frac{1}{20}\\[1pt]
\frac{4}{5} & \frac{3}{10} & \frac{1}{10} & \frac{7}{10} & \frac{1}{10} & \frac{2}{5} & \frac{1}{20} & \frac{1}{5} & \frac{3}{4} & \frac{1}{20}\\[1pt]
\frac{2}{5} & \frac{1}{20} & \frac{3}{10} & \frac{7}{10} & \frac{1}{20} & \frac{1}{10} & \frac{1}{4} & \frac{3}{4} & \frac{3}{5} & \frac{1}{20}\\[1pt]
\frac{9}{10} & \frac{3}{10} & \frac{3}{10} & \frac{4}{5} & \frac{1}{10} & \frac{1}{4} & \frac{7}{10} & \frac{1}{20} & \frac{1}{5} & \frac{3}{10} \\[1pt]
\frac{3}{10} & \frac{1}{10} & \frac{2}{5}& \frac{1}{4} & \frac{1}{20} & \frac{9}{10} & \frac{1}{4} & \frac{1}{10} & \frac{1}{20} & \frac{2}{5} \\[1pt]
\frac{4}{5} & \frac{3}{4} & \frac{1}{10} & \frac{1}{5} & \frac{2}{5} & \frac{1}{20} & \frac{3}{10}  & \frac{1}{5} & \frac{1}{10} & \frac{1}{4} \\[1pt]
\frac{2}{5} & \frac{2}{5} & \frac{9}{10} & \frac{7}{10} & \frac{1}{4} & \frac{1}{5} & \frac{1}{20} & \frac{1}{10} & \frac{2}{5} & \frac{1}{4} \\[1pt]
\end{bmatrix}.
$$
Given expected rewards $\left\{ \mu_{n,i}\right\}$, the
rewards are generated as $r_{n,i}\left(t\right)=\mu_{n,i}+z_{n,i}\left(t\right)$
where $\left\{ z_{n,i}\left(t\right)\right\} $ are independent and
uniformly distributed on $\left[-0.05,0.05\right]$ for each $n,i$. The chosen
parameters were $c_{1}=1000$ and $c_{2}=2000$ and $c_{3}=4000$ for all experiments, and are chosen to ensure that the additive constant $C_0$ is small (since $k_0$ is small), as the exploration and matching phases are long enough from the beginning.

At the beginning of the $k$-th matching phase, each player played her action from the last exploitation phase if it is in  $\mathcal{E}_{n}^{k}$, or a random action from $\mathcal{E}_{n}^{k}$ otherwise. Although it has no effect on the theoretical bounds, it improved the performance in practice significantly. Another practical improvement was achieved by introducing a factor of 0.01 to the confidence intervals, which requires larger $c_1$ but does not affect the analysis otherwise. The step size sequence, that is updated only on resets, was chosen as $\varepsilon_k=\frac{0.2}{1+\log k}$.

In Fig. \ref{fig:totalRegretN4}, we present the total expected regret
versus time, averaged over 100 realizations, with $U_1$ as the expected reward matrix  ($N=4$). The shaded area denotes one standard deviation around the mean. This scenario has 24 matchings - 16
with minimal expected reward  $\frac{1}{10}$, 7 with $\frac{1}{4}$, and one optimal
matching with $\frac{1}{2}$.  It can be seen that
in all 100 experiments the players learned the max-min optimal matching by the end of the third epoch. This suggests that $k_{0}$ is much smaller than our theoretical bound. As expected, the regret scales (approximately) logarithmically. For comparison, the optimal sum of expected rewards for $U_1$ is 2.15, but the matching that achieves it has a minimal expected reward of $\frac{1}{4}$. Hence, a multi-player bandit algorithm that optimizes the expected sum of rewards will have regret $\Omega\left(\frac{T}{4}\right)$.
 
In Fig. \ref{fig:totalRegretN5}, we present the total expected regret
versus time, averaged over 100 realizations, with $U_2$ as the expected reward matrix ($N=10$). The shaded area denotes one standard deviation around the mean. In this scenario only 136 matchings out of the $10!=3628800$ are optimal with minimal utility of 0.4. There are 3798 matchings with 0.3, 16180 with 0.25, 62066 with 0.2,  785048 with 0.1 and 2761572 with 0.05. With more players and arms, $k_{0}$ is larger, but players still learn the max-min optimal matching by the sixth epoch. Again, the regret scales  (approximately) logarithmically as guaranteed by Theorem 1. For comparison, the optimal sum of expected rewards for $U_2$ is 7.35, but the matching that achieves it has a minimal expected reward of 0.3. Hence, a multi-player bandit algorithm that optimizes the expected sum of rewards will have regret $\Omega\left(\frac{T}{10}\right)$.
\begin{figure}[tbh]
\centering
\includegraphics[width=7cm,height=7cm,keepaspectratio]{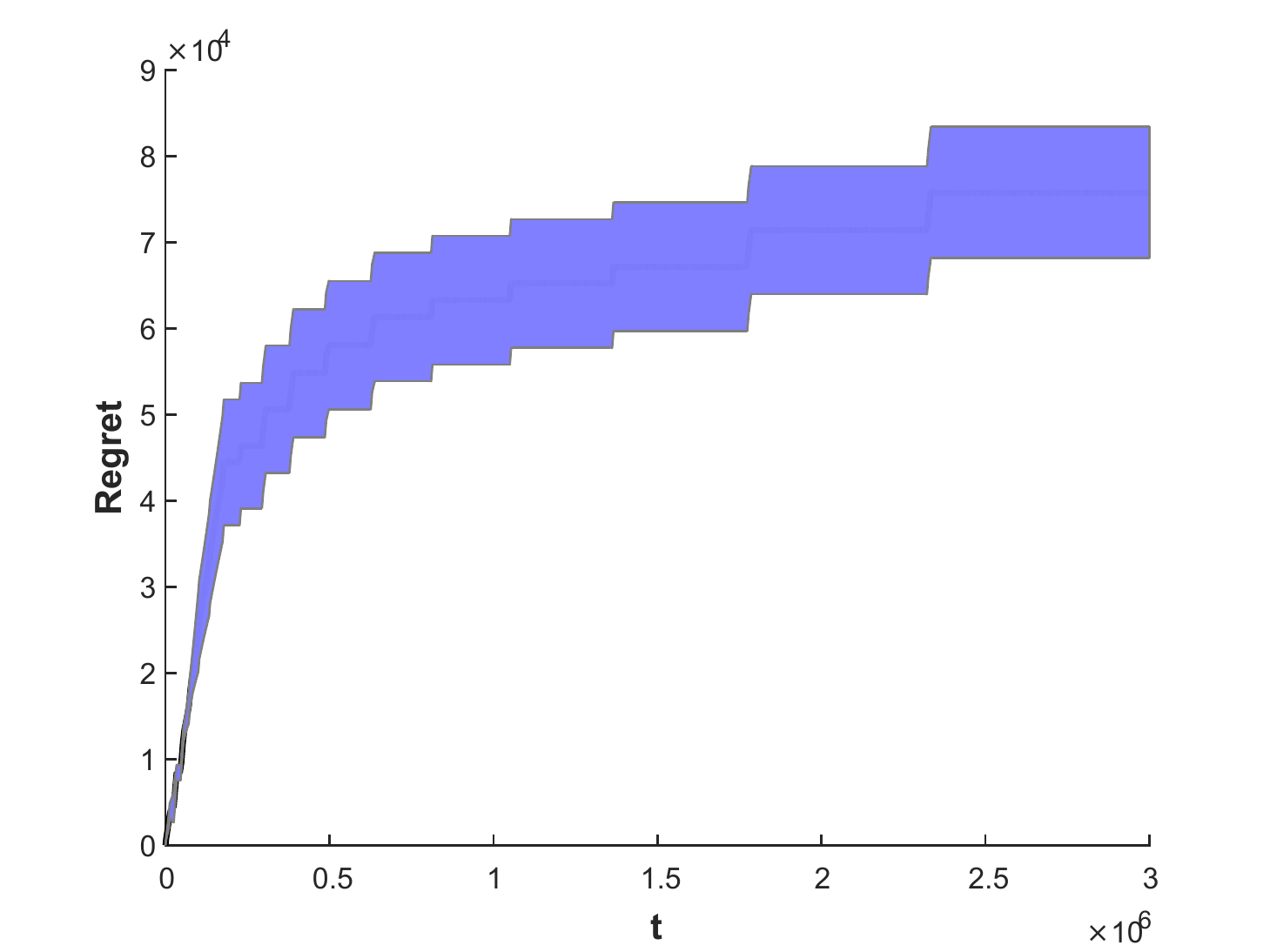}
\vspace{-.5cm}
\caption{Total regret as a function of time, averaged
over 100 experiments and with $U_1$ ($N=4$).\label{fig:totalRegretN4}}
\end{figure}
\begin{figure}[tbh]
\vspace{-.5cm}
\centering
\includegraphics[width=7cm,height=7cm,keepaspectratio]{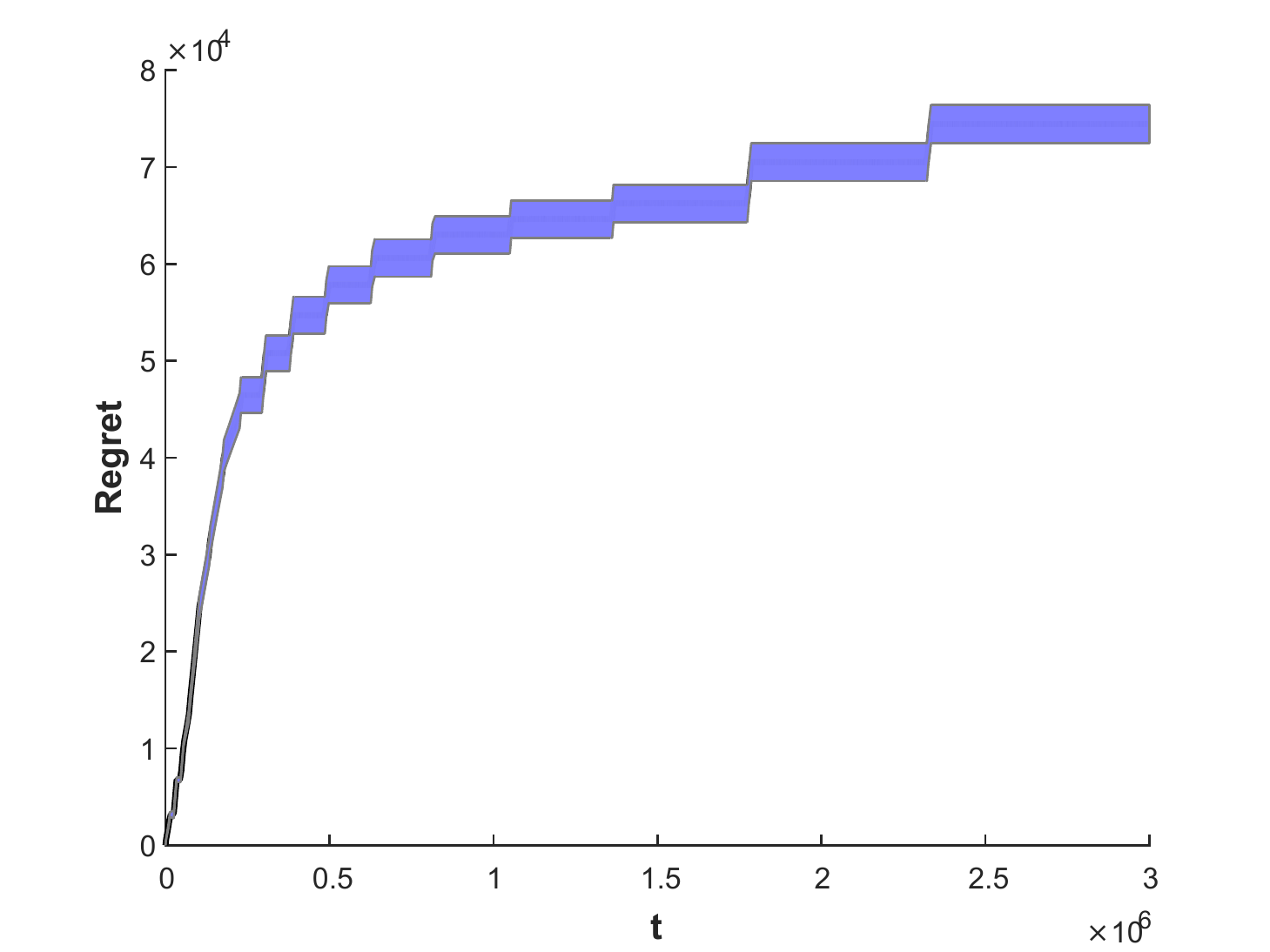}
\vspace{-.5cm}
\caption{Total  regret as a function of time, averaged
over 100 experiments and with $U_2$ ($N=10$).\label{fig:totalRegretN5}}
\vspace{-.5cm}
\end{figure}
\section{Conclusions and Future Work}
We studied a multi-player multi-armed bandit game where players cooperate to learn how to allocate arms, thought of as resources, so as to maximize the minimal expected reward received by any player. To allow for a meaningful notion of fairness, we employed the heterogeneous model where arms can have different expected rewards for each player.
Our algorithm operates in the restrictive setting of bandit feedback, where each player only observes the reward for the arm she played and cannot observe the actions or rewards of other players.
We proposed a novel fully distributed algorithm that achieves a near order optimal total expected regret of $O\left(\log \log T \log T\right)$, where $\log \log T$ can be improved to any increasing function of $T$.

It is still an open question whether a total expected regret of $O(\log T)$ is achievable in our scenario, when the problem parameters are unknown. Following our discussion on the additive constant $C_0$, an algorithm that achieves  $O(\log T)$ even with unknown parameters will expose the multiplicative factors of the $\log T$ in the regret bound, and their dependence on $\Delta$ and $M$. These factors are likely to be strongly affected by the time it takes the algorithm to find a matching. If some limited communication
is allowed, more sophisticated algorithms to distributedly compute
the matching are possible, based on gossip, message passing, or auctions \cite{bayati2008max, naparstek2016expected}. These algorithms do not need a consensus phase, eliminating the $M$ factor in \eqref{eq:4} and reducing the convergence time $\bar{\tau}$, but it is unclear if these approaches can achieve $O(\log T)$ regret.
Focusing on our setting, an interesting question is whether one can design a better distributed matching algorithm that can operate with no communication between players.

In some applications one is interested in guaranteeing a target QoS for each user that is ``good enough'' (see \cite{lai1984asymptotically,katz2020true} for the single-player case). This is a weaker requirement than max-min fairness between users. Hence, an interesting open question is whether better regret bounds for our multi-player bandit scenario can be obtained in this case. 

\section{Proof of Theorem 1}
Let $K$ be the number of epochs that start within the $T$ turns.
Since 
\begin{align}
T&\geq\sum_{k=1}^{K-1}\left(c_{1}\log k+c_{2}\log k+M+c_{3}\left(\frac{4}{3}\right)^{k}\right) \nonumber\\
&\geq3c_{3}\left(\left(\frac{4}{3}\right)^{K}-\frac{4}{3}\right)  
\end{align}
$K$ is upper bounded by $K\leq\log_{\frac{4}{3}}\left(\frac{T}{3c_{3}}+\frac{4}{3}\right)$.
Let $k_{0}$ be a constant epoch index that is large enough for
the bounds of Lemma \ref{lem: exploration}, Lemma \ref{lem:Convergence Lemma}, and inequality (c) in \eqref{eq:22} to hold. Intuitively,
this is the epoch after which the matching phase duration is long
enough, the step size  $\varepsilon_{k}$ is small enough, and the confidence intervals are sufficiently tight. Define $E_{k}$
as the event where a $\gamma^{*}$-matching is not played in the
$k$-th exploitation phase. We now bound the total expected regret
of epoch $k>k_{0}$, denoted by ${R}_{k}$:
\begin{align*}
    {R}_{k}&\leq M + \left(c_{1}+c_{2}\right)\log (k+1)+\P\left(E_{k}\right)c_{3}\left(\frac{4}{3}\right)^{k} +3\nonumber\\
    & \underset{\left(a\right)}{\leq} M + (c_{1}+c_{2})\log (k+1)+3\nonumber\\
    &\hspace{1cm} +\left(7NMe^{-\frac{c_{1}}{12}k}+e^{-\frac{3k}{10}}\right)c_{3}\left(\frac{4}{3}\right)^{k}\nonumber\\
    &\underset{\left(b\right)}{\leq} M+3+(c_{1}+c_{2})\log (k+1)+8NMc_{3}\beta^{k}\nonumber\\
    &\underset{\left(c\right)}{\leq}M+2(c_{1}+c_{2})\log k \numberthis \label{eq:22}
\end{align*}
where (a) uses Lemma \ref{lem:Convergence Lemma}, (b) follows for
some constant $\beta<1$ since $e^{-\frac{3}{10}}<\frac{3}{4}$ and
$c_{1}\geq4$ and (c) follows for $k>k_{0}$. We conclude that, for some additive constant $C_0$,
\begin{align*}
R(T)&=\sum_{k=1}^{K}{R}_{k} \underset{\left(a\right)}{\leq}MK+2\sum_{k=k_{0}+1}^{K}\left(c_{1}+c_{2}\right)\log k\\&\hspace{.5cm}+\sum_{k=1}^{k_{0}}\left(\left(c_{1}+c_{2}\right)\log (k+1)+c_{3}\left(\frac{4}{3}\right)^{k}+3\right)
\\
&{\leq} C_0+MK+ 2\left(c_{1}+c_{2}\right)K\log K \numberthis\label{eq:23}
\end{align*}
where (a) follows by completing the last epoch to a full epoch
which only increases $R(T)$, and by using \eqref{eq:22}. Then, we obtain \eqref{eq:4} by upper bounding $K\leq\log_{\frac{4}{3}}\left(\frac{T}{3c_{3}}+\frac{4}{3}\right)\leq\log_{\frac{4}{3}}\left(\frac{T}{c_{3}}\right)$, where the second inequality is only used to simplify the expression, and holds for all $T\geq2c_3$.
\vspace{-.25cm}
\section*{Acknowledgements}
\vspace{-.25cm}
The authors gratefully acknowledge funding from the Koret Foundation grant for Smart Cities and Digital Living, NSF GRFP, Alcatel-Lucent Stanford Graduate Fellowship, and ISF grant 1644/18.

\bibliographystyle{icml2020.bst}
\bibliography{FairBandit}

\begin{thebibliography}{40}
\providecommand{\natexlab}[1]{#1}
\providecommand{\url}[1]{\texttt{#1}}
\expandafter\ifx\csname urlstyle\endcsname\relax
  \providecommand{\doi}[1]{doi: #1}\else
  \providecommand{\doi}{doi: \begingroup \urlstyle{rm}\Url}\fi

\bibitem[Alatur et~al.(2019)Alatur, Levy, and Krause]{alatur2019multi}
Alatur, P., Levy, K.~Y., and Krause, A.
\newblock Multi-player bandits: The adversarial case.
\newblock \emph{arXiv preprint arXiv:1902.08036}, 2019.

\bibitem[Anandkumar et~al.(2011)Anandkumar, Michael, Tang, and
  Swami]{Anandkumar2011}
Anandkumar, A., Michael, N., Tang, A.~K., and Swami, A.
\newblock Distributed algorithms for learning and cognitive medium access with
  logarithmic regret.
\newblock \emph{IEEE Journal on Selected Areas in Communications}, 29\penalty0
  (4):\penalty0 731--745, 2011.

\bibitem[Asadpour \& Saberi(2010)Asadpour and
  Saberi]{asadpour2010approximation}
Asadpour, A. and Saberi, A.
\newblock An approximation algorithm for max-min fair allocation of indivisible
  goods.
\newblock \emph{SIAM Journal on Computing}, 39\penalty0 (7):\penalty0
  2970--2989, 2010.

\bibitem[Avner \& Mannor(2014)Avner and Mannor]{Avner2014}
Avner, O. and Mannor, S.
\newblock Concurrent bandits and cognitive radio networks.
\newblock In \emph{Joint European Conference on Machine Learning and Knowledge
  Discovery in Databases}, pp.\  66--81, 2014.

\bibitem[Avner \& Mannor(2016)Avner and Mannor]{Avner2016}
Avner, O. and Mannor, S.
\newblock Multi-user lax communications: a multi-armed bandit approach.
\newblock In \emph{INFOCOM 2016-The 35th Annual IEEE International Conference
  on Computer Communications, IEEE}, pp.\  1--9, 2016.

\bibitem[Bar-On \& Mansour(2019)Bar-On and Mansour]{bar2019individual}
Bar-On, Y. and Mansour, Y.
\newblock Individual regret in cooperative nonstochastic multi-armed bandits.
\newblock In \emph{Advances in Neural Information Processing Systems}, pp.\
  3110--3120, 2019.

\bibitem[Bayati et~al.(2008)Bayati, Shah, and Sharma]{bayati2008max}
Bayati, M., Shah, D., and Sharma, M.
\newblock Max-product for maximum weight matching: Convergence, correctness,
  and lp duality.
\newblock \emph{IEEE Transactions on Information Theory}, 54\penalty0
  (3):\penalty0 1241--1251, 2008.

\bibitem[Besson \& Kaufmann(2018)Besson and Kaufmann]{Besson2018}
Besson, L. and Kaufmann, E.
\newblock Multi-player bandits revisited.
\newblock In \emph{Algorithmic Learning Theory}, pp.\  56--92, 2018.

\bibitem[Bistritz \& Leshem(2018)Bistritz and Leshem]{bistritz2018distributed}
Bistritz, I. and Leshem, A.
\newblock Distributed multi-player bandits-a game of thrones approach.
\newblock In \emph{Advances in Neural Information Processing Systems}, pp.\
  7222--7232, 2018.

\bibitem[Boursier \& Perchet(2019)Boursier and Perchet]{boursier2019sic}
Boursier, E. and Perchet, V.
\newblock {SIC-MMAB}: synchronisation involves communication in multiplayer
  multi-armed bandits.
\newblock In \emph{Advances in Neural Information Processing Systems}, pp.\
  12048--12057, 2019.

\bibitem[Boursier et~al.(2019)Boursier, Perchet, Kaufmann, and
  Mehrabian]{boursier2019practical}
Boursier, E., Perchet, V., Kaufmann, E., and Mehrabian, A.
\newblock A practical algorithm for multiplayer bandits when arm means vary
  among players.
\newblock \emph{arXiv preprint arXiv:1902.01239}, 2019.

\bibitem[Bubeck et~al.(2012)Bubeck, Cesa-Bianchi, et~al.]{bubeck2012regret}
Bubeck, S., Cesa-Bianchi, N., et~al.
\newblock Regret analysis of stochastic and nonstochastic multi-armed bandit
  problems.
\newblock \emph{Foundations and Trends{\textregistered} in Machine Learning},
  5\penalty0 (1):\penalty0 1--122, 2012.

\bibitem[Bubeck et~al.(2019)Bubeck, Li, Peres, and Sellke]{bubeck2019non}
Bubeck, S., Li, Y., Peres, Y., and Sellke, M.
\newblock Non-stochastic multi-player multi-armed bandits: Optimal rate with
  collision information, sublinear without.
\newblock \emph{arXiv preprint arXiv:1904.12233}, 2019.

\bibitem[Cohen et~al.(2017)Cohen, H{\'e}liou, and Mertikopoulos]{Cohen2017}
Cohen, J., H{\'e}liou, A., and Mertikopoulos, P.
\newblock Learning with bandit feedback in potential games.
\newblock In \emph{Proceedings of the 31th International Conference on Neural
  Information Processing Systems}, 2017.

\bibitem[Darak \& Hanawal(2019)Darak and Hanawal]{darak2019multi}
Darak, S.~J. and Hanawal, M.~K.
\newblock Multi-player multi-armed bandits for stable allocation in
  heterogeneous ad-hoc networks.
\newblock \emph{IEEE Journal on Selected Areas in Communications}, 37\penalty0
  (10):\penalty0 2350--2363, 2019.

\bibitem[Evirgen \& Kose(2017)Evirgen and Kose]{Evirgen2017}
Evirgen, N. and Kose, A.
\newblock The effect of communication on noncooperative multiplayer multi-armed
  bandit problems.
\newblock In \emph{arXiv preprint arXiv:1711.01628, 2017}, 2017.

\bibitem[Hanawal \& Darak(2018)Hanawal and Darak]{hanawal2018multi}
Hanawal, M.~K. and Darak, S.~J.
\newblock Multi-player bandits: A trekking approach.
\newblock \emph{arXiv preprint arXiv:1809.06040}, 2018.

\bibitem[Hoeffding(1994)]{hoeffding1994probability}
Hoeffding, W.
\newblock Probability inequalities for sums of bounded random variables.
\newblock In \emph{The Collected Works of Wassily Hoeffding}, pp.\  409--426.
  Springer, 1994.

\bibitem[Jabbari et~al.(2017)Jabbari, Joseph, Kearns, Morgenstern, and
  Roth]{jabbari2017fairness}
Jabbari, S., Joseph, M., Kearns, M., Morgenstern, J., and Roth, A.
\newblock Fairness in reinforcement learning.
\newblock In \emph{Proceedings of the 34th International Conference on Machine
  Learning-Volume 70}, pp.\  1617--1626. JMLR. org, 2017.

\bibitem[Joseph et~al.(2016)Joseph, Kearns, Morgenstern, and
  Roth]{joseph2016fairness}
Joseph, M., Kearns, M., Morgenstern, J.~H., and Roth, A.
\newblock Fairness in learning: Classic and contextual bandits.
\newblock In \emph{Advances in Neural Information Processing Systems}, pp.\
  325--333, 2016.

\bibitem[Kalathil et~al.(2014)Kalathil, Nayyar, and Jain]{Kalathil2014}
Kalathil, D., Nayyar, N., and Jain, R.
\newblock Decentralized learning for multiplayer multiarmed bandits.
\newblock \emph{IEEE Transactions on Information Theory}, 60\penalty0
  (4):\penalty0 2331--2345, 2014.

\bibitem[Katz-Samuels \& Jamieson(2020)Katz-Samuels and Jamieson]{katz2020true}
Katz-Samuels, J. and Jamieson, K.
\newblock The true sample complexity of identifying good arms.
\newblock In \emph{International Conference on Artificial Intelligence and
  Statistics}, pp.\  1781--1791, 2020.

\bibitem[Lai et~al.(2008)Lai, Jiang, and Poor]{Lai2008}
Lai, L., Jiang, H., and Poor, H.~V.
\newblock Medium access in cognitive radio networks: A competitive multi-armed
  bandit framework.
\newblock In \emph{Signals, Systems and Computers, 2008 42nd Asilomar
  Conference on}, pp.\  98--102, 2008.

\bibitem[Lai \& Robbins(1984)Lai and Robbins]{lai1984asymptotically}
Lai, T.~L. and Robbins, H.
\newblock Asymptotically optimal allocation of treatments in sequential
  experiments.
\newblock \emph{Design of Experiments: Ranking and Selection}, pp.\  127--142,
  1984.

\bibitem[Lai \& Robbins(1985)Lai and Robbins]{Lai1985}
Lai, T.~L. and Robbins, H.
\newblock Asymptotically efficient adaptive allocation rules.
\newblock \emph{Advances in applied mathematics}, 6\penalty0 (1):\penalty0
  4--22, 1985.

\bibitem[Liu et~al.(2013)Liu, Liu, and Zhao]{Liu2013}
Liu, H., Liu, K., and Zhao, Q.
\newblock Learning in a changing world: Restless multiarmed bandit with unknown
  dynamics.
\newblock \emph{IEEE Transactions on Information Theory}, 59\penalty0
  (3):\penalty0 1902--1916, 2013.

\bibitem[Liu \& Zhao(2010)Liu and Zhao]{Liu2010}
Liu, K. and Zhao, Q.
\newblock Distributed learning in multi-armed bandit with multiple players.
\newblock \emph{IEEE Transactions on Signal Processing}, 58\penalty0
  (11):\penalty0 5667--5681, 2010.

\bibitem[Liu et~al.(2019)Liu, Mania, and Jordan]{liu2019competing}
Liu, L.~T., Mania, H., and Jordan, M.~I.
\newblock Competing bandits in matching markets.
\newblock \emph{arXiv preprint arXiv:1906.05363}, 2019.

\bibitem[Magesh \& Veeravalli(2019)Magesh and Veeravalli]{magesh2019multi}
Magesh, A. and Veeravalli, V.~V.
\newblock Multi-player multi-armed bandits with non-zero rewards on collisions
  for uncoordinated spectrum access.
\newblock \emph{arXiv preprint arXiv:1910.09089}, 2019.

\bibitem[Mo \& Walrand(2000)Mo and Walrand]{mo2000fair_alpha}
Mo, J. and Walrand, J.
\newblock Fair end-to-end window-based congestion control.
\newblock \emph{IEEE/ACM Transactions on networking}, \penalty0 (5):\penalty0
  556--567, 2000.

\bibitem[Naparstek \& Leshem(2016)Naparstek and Leshem]{naparstek2016expected}
Naparstek, O. and Leshem, A.
\newblock Expected time complexity of the auction algorithm and the push
  relabel algorithm for maximum bipartite matching on random graphs.
\newblock \emph{Random Structures \& Algorithms}, 48\penalty0 (2):\penalty0
  384--395, 2016.

\bibitem[Nayyar et~al.(2016)Nayyar, Kalathil, and Jain]{Nayyar2016}
Nayyar, N., Kalathil, D., and Jain, R.
\newblock On regret-optimal learning in decentralized multi-player multi-armed
  bandits.
\newblock \emph{IEEE Transactions on Control of Network Systems}, PP\penalty0
  (99):\penalty0 1--1, 2016.

\bibitem[Radunovic \& Le~Boudec(2007)Radunovic and
  Le~Boudec]{radunovic2007unified}
Radunovic, B. and Le~Boudec, J.-Y.
\newblock A unified framework for max-min and min-max fairness with
  applications.
\newblock \emph{IEEE/ACM Transactions on networking}, 15\penalty0 (5):\penalty0
  1073--1083, 2007.

\bibitem[Rosenski et~al.(2016)Rosenski, Shamir, and Szlak]{Rosenski2016}
Rosenski, J., Shamir, O., and Szlak, L.
\newblock Multi-player bandits--a musical chairs approach.
\newblock In \emph{International Conference on Machine Learning}, pp.\
  155--163, 2016.

\bibitem[Sankararaman et~al.(2019)Sankararaman, Ganesh, and
  Shakkottai]{sankararaman2019social}
Sankararaman, A., Ganesh, A., and Shakkottai, S.
\newblock Social learning in multi agent multi armed bandits.
\newblock \emph{Proceedings of the ACM on Measurement and Analysis of Computing
  Systems}, 3\penalty0 (3):\penalty0 1--35, 2019.

\bibitem[Tibrewal et~al.(2019)Tibrewal, Patchala, Hanawal, and
  Darak]{tibrewal2019distributed}
Tibrewal, H., Patchala, S., Hanawal, M.~K., and Darak, S.~J.
\newblock Distributed learning and optimal assignment in multiplayer
  heterogeneous networks.
\newblock In \emph{IEEE INFOCOM 2019-IEEE Conference on Computer
  Communications}, pp.\  1693--1701. IEEE, 2019.

\bibitem[Vakili et~al.(2013)Vakili, Liu, and Zhao]{Vakili2013}
Vakili, S., Liu, K., and Zhao, Q.
\newblock Deterministic sequencing of exploration and exploitation for
  multi-armed bandit problems.
\newblock \emph{IEEE Journal of Selected Topics in Signal Processing},
  7\penalty0 (5):\penalty0 759--767, 2013.

\bibitem[Wei et~al.(2015)Wei, Iyer, Wang, Bai, and Bilmes]{wei2015mixed}
Wei, K., Iyer, R.~K., Wang, S., Bai, W., and Bilmes, J.~A.
\newblock Mixed robust/average submodular partitioning: Fast algorithms,
  guarantees, and applications.
\newblock In \emph{Advances in Neural Information Processing Systems}, pp.\
  2233--2241, 2015.

\bibitem[Zehavi et~al.(2013)Zehavi, Leshem, Levanda, and
  Han]{zehavi2013weighted}
Zehavi, E., Leshem, A., Levanda, R., and Han, Z.
\newblock Weighted max-min resource allocation for frequency selective
  channels.
\newblock \emph{IEEE transactions on signal processing}, 61\penalty0
  (15):\penalty0 3723--3732, 2013.

\bibitem[Zemel et~al.(2013)Zemel, Wu, Swersky, Pitassi, and
  Dwork]{zemel2013learning}
Zemel, R., Wu, Y., Swersky, K., Pitassi, T., and Dwork, C.
\newblock Learning fair representations.
\newblock In \emph{International Conference on Machine Learning}, pp.\
  325--333, 2013.

\end{thebibliography}

\end{document}